\newif\iflipics
\newif\iflong 
\longtrue

\iflipics
\documentclass[a4paper,USenglish,cleveref,final,nolineno%
]{socg-lipics-v2021} 
\iflong

\hideLIPIcs
\makeatletter
\let\@oddfoot\@empty 
\makeatother
\fi

\else
\documentclass[11pt,a4paper]{article}
\usepackage[margin=24mm]{geometry}

\fi

\usepackage{amsmath,amssymb}
\usepackage{amsthm,amsfonts}

\usepackage{hyperref}
\hypersetup{hidelinks} 
\usepackage{graphicx}

\iflipics\else
\newtheorem{theorem}{Theorem}
\newtheorem{lemma}[theorem]{Lemma}
\newtheorem{proposition}[theorem]{Proposition}
\newtheorem{corollary}[theorem]{Corollary}
\fi

\newcommand\pforw{\vec{p}}
\newcommand\var{\mathrm{var}}
\newcommand\bal{\textit{bal}}

\newbox\backarrow
\setbox\backarrow=\hbox{\hskip 0,8pt 
  \rotatebox[origin=c]{180}{$\mathchar"017E$}}
\dp\backarrow=1pt 
\wd\backarrow=0pt
\ht\backarrow=2,2pt 
\newcommand\qback{\vbox{\copy\backarrow\nointerlineskip\hbox{$q$}}}

\title{Estimating the Number of Nodes and Paths in a DAG 
  by Following a Random Path}

\iflipics
\else
\author{Günter Rote\thanks{Institut für Informatik,
Freie Universität Berlin, Germany, 
\texttt{rote@inf.fu-berlin.de}, \url{https://orcid.org/0000-0002-0351-5945}}}
\date{October 31, 2026
}
\fi

\iflipics
\author{Günter Rote}{Institut für Informatik,
Freie Universität Berlin
}{rote@inf.fu-berlin.de}{https://orcid.org/0000-0002-0351-5945}{}

\authorrunning{G. Rote}

\Copyright{Günter Rote\iflong, \today\ (full version)\fi}

\ccsdesc{Mathematics of computing~Graph algorithms}

\keywords{Approximate counting,
random walk,
importance sampling,
polytopes}

\iflong\else
\relatedversion{
  \url{%
https://github.com/guenterrote/estimate-DAG/blob/main/fullpaper.pdf%
}
\iflong (this version)\fi
}

\fi

\let\paragraph=\subparagraph

\fi

\begin{document}

\renewcommand\sectionautorefname{Section} 
\renewcommand\subsectionautorefname{Section} 

\maketitle
\begin{abstract}
  We investigate a simple method to compute an unbiased estimate of the
  number of nodes of a directed acyclic graph (DAG) by following a
  random directed path.  This method is due to Leonard Pitt from
  1987, and it generalizes 
  a
  method of D.~E.~Knuth from the 1960s for estimating the size of a
  rooted tree. A slight variation makes it possible to estimate the number of
  source-sink paths.
  \iflong\par\fi
  We extend the algorithm to use many ``agents'' that walk
  cooperatively through the graph. This extension unifies and generalizes several 
  methods that have been proposed for approximately counting the nodes of a tree.
  
\end{abstract}

\iflipics
\else
\paragraph{Keywords.}
Approximate counting,
random walk,
importance sampling,
polytopes


\bigskip
\noindent \textbf
{MSC2020 classification.}\\
05C85 Graph algorithms,
05A 
Enumerative combinatorics,
62D 
Statistical sampling 
\fi

\iflong
{

\subsection*{Table of Contents}

\def\addvspace#1{\vspace {1.5pt}} 

\def\section*#1{} 

\tableofcontents

}

\fi


\section{Problem statement and the Path Sampling algorithm (Algorithm~P)}
\label{sec:start}
We are given a
directed acyclic graph (DAG) $G=(V,E)$ and a \emph{cost function} $c\colon V\to \mathbb R$
on the
nodes.
The task is to 
estimate 
the total cost $c(G) := \sum_{v\in V} c(v)$.


The intended setting of the problem is that
 $G$ is too large to be fully scanned, but
we have access to the neighbors of a node and can thereby explore
parts of 
$G$.
In particular, for the basic Path Sampling algorithm, we make the following
assumptions:
\begin{itemize}
  \advance\itemsep -3pt
\item There is a single \emph{source} \iflong node \fi $s_0$ from which all other
  nodes can be reached by directed paths.
\item For any node, 
  we can pick a \emph{successor} node
\iflong (i.\,e., a neighbor via an outgoing arc) \fi
  uniformly at
  random.

\item For any node $v$, we can
  determine
  its outdegree $d^+(v)$,
  i.\,e., the number of its \emph{successors},
  its indegree $d^-(v)$,
  i.\,e., the number of its \emph{predecessors},
  and
  we can access 
  its cost $c(v)$.
\end{itemize}
The following algorithm, due to
Leonard Pitt \cite{Pitt1987},
computes an unbiased estimate $X$ of
$c(G)$ by walking along a directed path from the source.
\autoref{fig:example} illustrates the algorithm on a small example.

\enlargethispage{4pt}
 \medskip
\par
\begin{centering}
  \fboxsep=7pt
\noindent\fbox
{\begin{minipage}[c]{99em}
    \def\;{}
\begin{tabbing}
  \textbf{Algorithm P: Path Sampling}\\
   \qquad\=\+
   /\!/ Initialization: 
  \\ 
  $X := 0$\;\quad\= /\!/ accumulated estimate\\
  $u:=s_0$\;\> /\!/ current node, starts at the source\\
  $W := 1$\; \>/\!/ weight\\
  \textbf{loop}\iflong\\
  \qquad \fi\ \ \=\+
  /\!/ visit $u$:\\
      $X := X + W\cdot c(u)\;$ \ /\!/ augment the estimate
      \\
      \textbf{if} $u$ has no successor, terminate and \textbf{return} $X$ as the
      estimate\\
 /\!/ otherwise, proceed to a random successor $v$:\-\-\\
      (R)\>\>
 pick a node 
      $v$
      among the $d^+(u)$ successors of $u$  uniformly at random\\
      (U)\>\>\+\+
      $W:= W\times d^+(u)\,/\,d^-(v)\;$ \ /\!/ update the weight
      \\
      $u := v\;$ \ /\!/ go to $v$
\end{tabbing}
\end{minipage}%
}

\end{centering}


\begin{figure}[htb]
  \centering
  \hbox to \textwidth{\hfill\includegraphics[page=3,scale=0.97]{example-DAG}
    \hfill
    \includegraphics[page=2,scale=0.97]{example-DAG}\hfill}
  \hbox to \textwidth{\hfill\includegraphics[page=4,scale=0.97]{example-DAG}
    \hfill
  \includegraphics[page=5,scale=0.97]{example-DAG}\hfill}
\caption{A 
  DAG $G$ with five nodes
    $v_1,\ldots,v_5$.
    The costs are $c_i=c(v_i)=i$, for a total cost of
    $c(G)=1+2+3+4+5=15$.
There are four sample paths that Algorithm P can take.    
    The expected estimate is $E[X]=
    \frac16\times 29 + 
    \frac16\times 20 + 
    \frac16\times 21.5 
    + 
    \frac12\times 6.5 
    = \frac{29+20+21.5+19.5}{6}
    =\frac{90}{6}
    =15
    $.
  }
  \label{fig:example}
\end{figure}


\paragraph{Path sampling for trees.}
Algorithm P was originally developed for rooted trees by D.~E.~Knuth in 1962.
It was first mentioned in a paper of Hall and Knuth
\cite{hall-knuth-1965} in 1965, and it has later been investigated
in its own right 
\cite{knuth75_estim},
  see also
%
\cite[Sect.~7.2.2, pp.~46--51,
Corollary~E]
{kn4b}.
For a tree, the indegree $d^-(v)$ is always 1, and the
weight $W$ is just the product of the outdegrees of the visited
nodes.

Algorithm P
is a straightforward extension to DAGs. 
To compensate for the fact that there can be several ways to reach
\iflong a given node \fi $v$,
we divide $W$ by the indegree $d^-(v)$ when
going to $v$.

\paragraph{Node costs.}
The original purpose and the most wide-spread application of Algorithm~P is
to estimate the cost of backtracking or branch-and-bound
algorithms.
The backtracking tree or branch-and-bound tree
that such an algorithm traverses may be 
enormous, while at other times, it is 
of moderate size and
the algorithm terminates 
reasonably fast.
If such a program has been running for a while, it would be useful to
know whether it is just about to finish or it would continue running
for years.
Therefore it is desirable to have a good prediction of the remaining
running time.
In these applications the cost $c(v)$ is the time to process node~$v$.

In other applications, the nodes represent some combinatorial
objects,
and one wishes to determine the number of these objects. In these situations,
one uses uniform costs $c(v)=1$ for all nodes~$v$.
For example, Avis and Devroye used the tree version of Algorithm~P to estimate
the number of vertices of a polytope~\cite{Avis-Devroye-2000}.

\iflong
If one wishes to count only the nodes with certain characteristics,
this can be arranged by choosing $c(v)\in\{0,1\}$.
 With appropriate
costs, 
 one can 
 gather statistics about various
properties of these objects,
 and of course, one can easily track several such 
 estimates simultaneously.
\fi

\subsection{Weighted selection}
\label{sec:weighted}
In line
(R),
the random successor does not have to be selected uniformly,
but we
can choose arbitrary positive transition probabilities $\pforw_{uv}$
subject
to the constraint that they form a probability distribution:
\begin{equation}
  \label{eq:forward}
  \sum_v \pforw_{uv}=1\text{, for all $u$ with $d^+(u)>0$}
\end{equation}
We call the quantities $\pforw_{uv}$
the \emph{forward proportions}.
\iflong
(The arrow above the $p$ 
does not signify a vector, but it reminds us of the direction
in which 
$\pforw_{uv}$ operates.)
\fi

Similarly, the incoming arcs at a node $v$ need not be weighted
equally, but they can be weighted by 
quantities
$\qback_{uv}$ that sum to 1 for the incoming arcs of each node~$v$:
\begin{equation}
  \label{eq:backward}
  \sum_u \qback_{uv}=1\text{, for all $v\ne s_0$}
\end{equation}
We call the quantities $\qback_{uv}$ the \emph{backward proportions}.
(They
\iflong
need not have a probabilistic interpretation, and they can even
\else
may
\fi
be negative.)

The update of the weight
when traversing the arc $uv$ in line (U) is replaced by
\begin{displaymath}
  \text {(U$_{q}$)}\qquad
  W:= W\times \qback_{uv}\,/\,\pforw_{uv} .
\end{displaymath}
\iflong The standard choices are
$\pforw_{uv}=1/d^+(u)$
and $\qback_{uv}=1/d^-(v)$. 
\fi


\begin{theorem}
  \label{th:unbiased}
  For arbitrary positive forward proportions
  $\pforw_{uv}$
satisfying~\eqref{eq:forward}
  and {backward proportions}~$\qback_{uv}$ satisfying~\eqref{eq:backward},  
 Algorithm~P produces an unbiased estimate of $c(G)$, i.\,e.,
  \begin{displaymath}
    E[X] = c(G).
  \end{displaymath}
\end{theorem}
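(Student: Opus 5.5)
By linearity of expectation, I will write $X=\sum_{v\in V} c(v)\,W_v$. Here $W_v$ is the weight the walk carries when it visits $v$, with $W_v:=0$ if $v$ is not visited. Since $G$ is acyclic, each node is visited at most once, so this decomposition is exact. It then suffices to show $E[W_v]=1$ for every node $v$, and the theorem follows by summing with coefficients $c(v)$. Finiteness is not an issue: $G$ is finite and acyclic, so the walk terminates after at most $|V|$ steps and all quantities are bounded.

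To prove $E[W_v]=1$, I will use induction over a topological order of $G$. For the base case $v=s_0$, the walk always starts at $s_0$ with $W=1$, so $E[W_{s_0}]=1$. For $v\ne s_0$, the walk reaches $v$ exactly when it visits some predecessor $u$ and then chooses the arc $uv$ in line (R). These events are disjoint for different $u$, since the walk enters $v$ at most once. The walk has not terminated at $u$, because $u$ has the successor $v$. The choice at $u$ is made independently of the history, with probability $\pforw_{uv}$, so conditioned on the history up to visiting $u$, the contribution to $W_v$ is $W_u\cdot \qback_{uv}/\pforw_{uv}$ with probability $\pforw_{uv}$. Taking expectations gives
\begin{displaymath}
  E[W_v]=\sum_{u:\,uv\in E} E[W_u]\cdot \pforw_{uv}\cdot\frac{\qback_{uv}}{\pforw_{uv}}
  =\sum_{u:\,uv\in E} E[W_u]\,\qback_{uv}.
\end{displaymath}
All predecessors $u$ precede $v$ in the topological order, so by induction $E[W_u]=1$. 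Then \eqref{eq:backward} gives $E[W_v]=\sum_u\qback_{uv}=1$. Positivity of $\pforw_{uv}$ is needed here so that the ratio is defined and every arc can actually be taken. Condition \eqref{eq:forward} is only needed for the walk to be a valid random process.

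The main point requiring care is the conditioning step. I will make it rigorous by writing $W_v=\sum_u W_u\cdot\mathbf 1[\text{walk takes }uv]\cdot\qback_{uv}/\pforw_{uv}$ and computing $E\bigl[W_u\,\mathbf 1[\text{walk takes }uv]\bigr]=E[W_u]\,\pforw_{uv}$. This identity holds because, given that the walk is at $u$, the next choice is independent of everything before it, which is the Markov property of the walk. As an alternative presentation, I could expand over paths: $E[W_v]$ equals the sum over all $s_0$–$v$ paths of the product of $\qback$ along the path, since the probabilities $\pforw$ cancel. Summing this path expansion over the last arc yields the same recursion.
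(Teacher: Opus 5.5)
Your proof is correct and is essentially the paper's argument. It uses the same decomposition $X=\sum_v c(v)W_v$, the same key claim $E[W_v]=1$, and the same induction over predecessors via \eqref{eq:backward}; the only difference is that the paper first expands $E[W_v]$ over $s_0$--$v$ paths so the $\pforw$'s cancel termwise (proving the more general Theorem~\ref{th:unbiased-paths} with no conditioning) and then inducts on the deterministic path sums $\sum_P e(P)$ --- which is exactly your ``alternative presentation'' --- whereas you induct directly on the expectations using the Markov property.
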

We will prove this below as a consequence of a more general statement.

\subsection{No backward proportions: counting paths}
\label{sec:paths}

If we omit the backward proportions, we get 
an estimate of the 
\emph{number of paths starting from the source}.
The weight update step (U) is replaced by
\begin{displaymath}
  \text {(U$_{1}$)}\qquad
  W:= W\,/\,\pforw_{uv} .
\end{displaymath}

\begin{theorem}
  \label{th:unbiased-unweighted-paths}
  For arbitrary positive forward proportions
  $\pforw_{uv}$
satisfying~\eqref{eq:forward},
Algorithm~P
with update step \textup{(U$_1$)} produces an estimate $X$ whose expected
value is
  \begin{displaymath}
    E[X] = 
\sum_{v\in V} c(v) \cdot \#(\text{paths from $s_0$ to $v$ in $G$}).
  \end{displaymath}
\end{theorem}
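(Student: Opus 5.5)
The plan is to compute $E[X]$ by linearity of expectation, split into the contributions of the individual nodes. Write $X=\sum_{t\ge 0} W_t\, c(u_t)$, where $u_0=s_0,u_1,u_2,\ldots$ is the random path taken by the algorithm and $W_t$ is the weight at the moment $u_t$ is visited. Because $G$ is acyclic, a directed path visits each node at most once. Grouping the terms by node therefore gives
\[
X=\sum_{v\in V} c(v)\, W(v)\,\mathbf 1[\text{$v$ is visited}],
\]
where $W(v)$ is the weight at the visit to $v$. It thus suffices to show, for every $v$, that
\[
E\bigl[W(v)\,\mathbf 1[\text{$v$ visited}]\bigr]
\]
equals the number of paths from $s_0$ to $v$. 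Since $G$ is finite and acyclic, the walk terminates after at most $|V|$ steps, so all sums are finite and no convergence issues arise.

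Next I will compute this expectation as a sum over paths. The event that $v$ is visited is the disjoint union, over all directed paths $P=(s_0=w_0,w_1,\ldots,w_k=v)$, of the events that the walk begins with exactly the prefix $P$. These events are disjoint because the walk's prefix of length $k$ is determined uniquely, and acyclicity means $v$ can appear only once on the walk.

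By the Markov choice rule in (R) with proportions $\pforw$, the event for $P$ has probability $\prod_{i=1}^{k}\pforw_{w_{i-1}w_i}$. On that event, update (U$_1$) gives the deterministic weight $W(v)=\prod_{i=1}^k 1/\pforw_{w_{i-1}w_i}$. The product of probability and weight is therefore exactly $1$. Every directed path from $s_0$ to $v$ actually occurs with positive probability, since all $\pforw_{uv}>0$; and any node with an outgoing arc on $P$ has $d^+>0$, so the walk does not terminate before reaching $v$.

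Summing over $P$ gives $E[W(v)\mathbf 1[v\text{ visited}]]=\#(\text{paths from }s_0\text{ to }v)$, and multiplying by $c(v)$ and summing over $v$ yields the claim. There is no real obstacle here. The only point requiring care is the bookkeeping of disjointness and of the one-visit-per-node property, both of which follow from acyclicity.

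The same path-sum, with the extra factors $\qback$, also reduces Theorem~\ref{th:unbiased} to the identity that $\sum_P\prod\qback$ over paths ending in $v$ equals $1$. That identity follows by induction in topological order using~\eqref{eq:backward}.
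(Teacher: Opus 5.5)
Your proof is correct and follows the paper's argument: the paper proves the more general Theorem~\ref{th:unbiased-paths} using the same decomposition $X=\sum_{v} W_v\, c(v)$ and the same path-sum for $E[W_v]$, in which the $\pforw$ factors cancel, and then obtains this statement as the special case $e\equiv 1$. Your closing remark on Theorem~\ref{th:unbiased} likewise matches the paper's inductive lemma showing that the $\qback$-weighted path sums equal~$1$.
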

\iflong As a special case, we can set
\else
Setting
\fi
$c(v)=1$ if $v$ is a sink
and  $c(v)=0$ otherwise\iflong , and we get
\else
\ gives
\fi
the number of source-sink paths.

\subsection{Arbitrary arc weights}
\label{sec:arbitrary-arc}

More generally, we can replace the backward proportions by arbitrary
real arc weights $e_{uv}\in \mathbb R$.
%
The weight function is extended from arcs to paths 
in the usual way: We
define $e(P)$ as the product
of the arc weights along the path~$P$:
\iflong
\begin{displaymath}
\else\begin{math}
  \fi
  e(P)=
  e(u_0u_1\ldots u_k) = e_{u_0u_1} e_{u_1u_2} \ldots e_{u_{k-1}u_k}
\iflong
\end{displaymath}
\else\end{math}.
  \fi


\begin{theorem}
  \label{th:unbiased-paths}
  For arbitrary positive forward proportions
  $\pforw_{uv}$
satisfying~\eqref{eq:forward}
and arbitrary arc weights $e_{uv}$,
Algorithm~P
with update step
\begin{displaymath}
  \textup {(U$_{e}$)}\qquad
    W:= W\times e_{uv}\,/\,\pforw_{uv} 
\end{displaymath}
produces an estimate $X$ whose expected
value is the quantity $c_e(G)$ defined as follows:
  \begin{displaymath}
    E[X] = c_e(G) :=
    \sum_{v\in V} 
    \
    \sum_{
      \textup{paths $P$ from $s_0$ to $v$}}
  e(P)c(v)
  \end{displaymath}
\end{theorem}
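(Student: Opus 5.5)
We prove \autoref{th:unbiased-paths} by linearity of expectation over paths, writing $X$ as a sum of contributions of the individual nodes visited. The random walk of Algorithm~P traces a random maximal path $u_0u_1\ldots u_k$ with $u_0=s_0$ and $u_k$ a sink. After the $i$-th step the weight is
\begin{displaymath}
  W_i=\prod_{j=1}^{i} \frac{e_{u_{j-1}u_j}}{\pforw_{u_{j-1}u_j}}
  = \frac{e(u_0\ldots u_i)}{\pforw(u_0\ldots u_i)},
\end{displaymath}
where $\pforw(P)$ denotes the product of the forward proportions along~$P$, in analogy with $e(P)$. Hence $X=\sum_{i=0}^k W_i\,c(u_i)$. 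We regroup this sum by prefixes. For every directed path $P=u_0\ldots u_i$ starting at $s_0$, define the indicator $I_P$ of the event that the walk begins with the prefix $P$. Then
\begin{displaymath}
  X=\sum_{P} I_P\,\frac{e(P)}{\pforw(P)}\,c(\text{end}(P)),
\end{displaymath}
where the sum ranges over all paths from $s_0$. This identity holds because exactly the prefixes of the actual walk have $I_P=1$. Since $G$ is acyclic and finite, there are finitely many such paths, so there are no convergence issues.

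The key step is $\Pr[I_P=1]=\pforw(P)$, which we prove by induction on the length of $P$. The walk starts at $s_0$ with probability~1. Suppose the walk has followed the prefix $u_0\ldots u_{i-1}$. Then $u_{i-1}$ has a successor, namely $u_i$, so the walk does not terminate there. In line (R) it then picks $u_i$ with probability $\pforw_{u_{i-1}u_i}$, independently of the history. Positivity of the $\pforw_{uv}$ guarantees that this probability is nonzero, so the division in (U$_e$) is legitimate.

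Taking expectations term by term then gives
\begin{displaymath}
  E[X]=\sum_P \pforw(P)\,\frac{e(P)}{\pforw(P)}\,c(\text{end}(P))=\sum_{v\in V}\ \sum_{P\colon s_0\to v} e(P)\,c(v)=c_e(G).
\end{displaymath}
The only real care needed is in the bookkeeping: termination happens exactly at sinks, which is consistent with every prefix ending at a non-sink being extendable, and the empty path at $s_0$ has $e=1$ and contributes $c(s_0)$.

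As corollaries, setting $e\equiv1$ gives \autoref{th:unbiased-unweighted-paths}. Setting $e_{uv}=\qback_{uv}$ gives \autoref{th:unbiased}, once we check that $\sum_{P\colon s_0\to v}\prod\qback=1$ for every $v$. This follows by induction over a topological order using~\eqref{eq:backward}: the sum for $v$ equals $\sum_u \qback_{uv}$ times the sum for $u$, which is $\sum_u \qback_{uv}\cdot 1=1$.
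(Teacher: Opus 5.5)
Your proof is correct and is essentially the paper's argument. The paper writes $X=\sum_v W_v c(v)$ and expands each $E[W_v]$ over the paths $P$ from $s_0$ to $v$, where the probability $\pforw(P)$ cancels against the weight $e(P)/\pforw(P)$; this is your prefix-indicator decomposition grouped by endpoint, and your derivation of \autoref{th:unbiased} by induction over a topological order matches the paper's lemma, which inducts on the length of the longest path from $s_0$.
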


\iflong
\subsection{Analysis of the expectation}
\label{sec:analysis}
\begin{proof}[Proof of \autoref{th:unbiased-paths}]
\else
\begin{proof}
\fi
We define
  for every node $v$ the random variable
  \begin{displaymath}
    W_v :=
    \begin{cases}
      \text{the value of $W$ at the time that $v$ was visited,}\
      &\text{if  $v$ was
        visited}\\
      0, & \text{if $v$ was not
        visited}\\
    \end{cases}
  \end{displaymath}
  Then 
  $X = \sum_{v\in V} W_v\cdot c(v)$,
and therefore,
  $E[X] = \sum_{v\in V} E[W_v]\cdot c(v)$,
  by linearity of expectation.
  The theorem is a direct consequence of the following lemma.
  
  \begin{lemma} 
    For every node $v$,
  \begin{math}
E[W_v ] =
    \sum_{\textup{paths $P$ from
        $s_0$ to $v$}}
  e(P)
  \end{math}.
\end{lemma}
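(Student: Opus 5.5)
Since $G$ is acyclic and the walk only follows arcs, it visits each node at most once, and the visited nodes form a directed path starting at $s_0$. I would write $E[W_v]$ as a sum over the possible walks that end at $v$. The event ``$v$ is visited'' is the disjoint union, over all directed paths $P=u_0u_1\ldots u_k$ with $u_0=s_0$ and $u_k=v$, of the events ``the first $k$ steps of the walk follow $P$''. Disjointness holds because the prefix of the walk up to $v$ is uniquely determined, and $v$ occurs at most once. Since $W_v=0$ off this union, this gives
\begin{displaymath}
  E[W_v] = \sum_{\text{paths $P$ from $s_0$ to $v$}} \Pr[\text{walk follows $P$}]\cdot W(P),
\end{displaymath}
where $W(P)$ is the (deterministic) value of $W$ on arrival at $v$ along $P$.

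Next, I would compute the two factors for a fixed path $P=u_0\ldots u_k$. Each $u_i$ with $i<k$ has the successor $u_{i+1}$, so the walk does not terminate there. Step (R) then chooses $u_{i+1}$ with probability $\pforw_{u_iu_{i+1}}$, independently of earlier choices. Hence
\begin{displaymath}
  \Pr[\text{walk follows $P$}]=\prod_{i=0}^{k-1}\pforw_{u_iu_{i+1}}.
\end{displaymath}
Starting from $W=1$, the update (U$_e$) multiplies $W$ by $e_{u_iu_{i+1}}/\pforw_{u_iu_{i+1}}$ at each step. Therefore
\begin{displaymath}
  W(P)=\prod_{i=0}^{k-1} \frac{e_{u_iu_{i+1}}}{\pforw_{u_iu_{i+1}}}.
\end{displaymath}
Multiplying the two factors, the forward proportions cancel because they are positive, and what remains is $e(P)$. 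Summing over $P$ gives the lemma. The case $v=s_0$ fits the same formula: the only path is the empty path, with $e(P)=1$, and $W_{s_0}=1$ always.

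The only delicate point is justifying the decomposition into disjoint events. Finiteness is automatic, since a DAG has finitely many paths, so the sum is finite. Acyclicity is what guarantees that $v$ is visited at most once and that the walk's prefix ending at $v$ is a well-defined path. Equivalently, one can avoid the explicit path decomposition by induction in topological order. In that version, $E[W_v]=\sum_u E[W_u]\,\pforw_{uv}\cdot e_{uv}/\pforw_{uv}=\sum_u E[W_u]e_{uv}$, conditioning on the predecessor $u$ from which $v$ is entered; this satisfies the same recursion as the path sum. I would present the direct path-sum argument and mention the inductive one as a check.
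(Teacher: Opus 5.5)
Your proof is correct and is essentially the paper's argument. You write $E[W_v]$ as a sum over the paths from $s_0$ to $v$, take the path probability as the product of the forward proportions and the weight as the product of the factors $e_{u_iu_{i+1}}/\pforw_{u_iu_{i+1}}$, and let the $\pforw$'s cancel; your justification of the disjoint decomposition via acyclicity, and the inductive cross-check, only spell out details the paper leaves implicit.
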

\begin{proof}
%
%
By definition, the expectation of  $W_v$ can be written as follows:
\begin{equation}
  \label{EW}
  E[W_v] = \sum_{\substack{\text{paths $P$ from $s_0$
        \ to $v$}}}
  \Pr[\text{path $P$ is taken}] \times (W_v\text{ as a result of path $P)$}
\end{equation}
For a path $P=u_0u_1\ldots u_k$, 
    \iflong   \begin{displaymath}  \else  \begin{math}\displaystyle
\fi      
  \Pr[\text{path $P$ is taken}]
  =
    \pforw_{u_0u_1} \pforw_{u_1u_2} \ldots \pforw_{u_{k-1}u_k}
\iflong.  \end{displaymath}      \else  \end{math}. \fi      
If the path $P$ is taken, then
\iflong   \begin{displaymath}  \else  \begin{math}\displaystyle \fi
  W_v =
  \frac{e_{u_0u_1}}{\pforw_{u_0u_1}}\cdot
  \frac{e_{u_1u_2}}{\pforw_{u_1u_2}}
  \cdots
  \frac{e_{u_{k-1}u_k}}{\pforw_{u_{k-1}u_k}}  
\iflong.  \end{displaymath}      \else  \end{math}. \fi      
Thus, in each product term in \eqref{EW},
the 
$\pforw\,$'s cancel, and the claim of the lemma follows.

The proof of
\autoref{th:unbiased-paths} is thus also complete.
\end{proof}
\setbox0=\hbox{\qedhere}
\end{proof}
\autoref{th:unbiased-unweighted-paths} is just a special case of
\autoref{th:unbiased-paths} with uniform arc weights $e\equiv 1$.

\begin{proof}[Proof of \autoref{th:unbiased} from \autoref{th:unbiased-paths}]
  Here the weights $e_{uw}$
  are the
  backward proportions $\qback_{uw}.$
  We only need to show the following statement:
 \setbox0=\hbox{\qedhere}
\end{proof} 
\begin{lemma} 
  Suppose the weights of the arcs $uw$ are chosen as $e_{uw}=\qback _{uw}$,
subject to~\eqref{eq:backward}.
 Then,
    for every node~$v$,
    \iflong
    \begin{displaymath}
      \else
    \begin{math}\displaystyle
\fi      
    \sum_{\textup{paths $P$ from
        $s_0$ to $v$}}
    e(P)
    =1
    \iflong.  \end{displaymath}      \else  \end{math}.\fi      
\end{lemma}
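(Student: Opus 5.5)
We prove the claim by induction on the nodes of $G$ in a topological order. Write $f(v) := \sum_{\text{paths $P$ from $s_0$ to $v$}} e(P)$. The goal is to show $f(v)=1$ for every node $v$.

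\emph{Base case.} The source $s_0$ comes first in any topological order. Since $G$ is acyclic, the only path from $s_0$ to itself is the trivial path of length $0$. Its weight is the empty product, so $f(s_0)=1$.

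\emph{Recursion.} Let $v\ne s_0$. Every path from $s_0$ to $v$ has length at least one, so it ends with a unique last arc $uv$, where $u$ is a predecessor of $v$. Removing this arc leaves a path from $s_0$ to $u$. This correspondence is a bijection between paths from $s_0$ to $v$ and pairs (predecessor $u$, path from $s_0$ to $u$). The weight $e$ is multiplicative along paths, so splitting off the last arc gives
\begin{displaymath}
  f(v) = \sum_{u\colon uv\in E} f(u)\, e_{uv} = \sum_{u\colon uv\in E} f(u)\,\qback_{uv}.
\end{displaymath}
Every predecessor $u$ of $v$ precedes $v$ in the topological order, so the induction hypothesis gives $f(u)=1$. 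Hence $f(v)=\sum_u \qback_{uv}=1$ by \eqref{eq:backward}. Note that $v$ has at least one predecessor, because it is reachable from $s_0$, so the constraint \eqref{eq:backward} applies to it.

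\emph{Possible obstacle.} The only subtle point is that the recursion needs the nodes ordered so that every predecessor is handled before the node itself, and that there are finitely many paths. Both follow from acyclicity, and the graph is finite. No probabilistic reasoning is needed, and the possible negativity of the $\qback_{uv}$ is irrelevant because the argument is purely algebraic.
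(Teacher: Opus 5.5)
Your proof is correct and is essentially the paper's proof. Both induct so that every predecessor of $v$ is handled before $v$: the paper inducts on the length of the longest path from $s_0$, and you use a topological order, which amounts to the same thing. Both then split each path at its last arc and conclude with the induction hypothesis and \eqref{eq:backward}.
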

\begin{proof}
  This is easy to prove by induction on the length of the longest path
  from $s_0$ to $v$.

  For the source $v=s_0$, the claim follows
  directly
  from the fact that the degenerate path $P=s_0$ with no arcs has weight~1.

  For a node $v\ne s_0$,
  let
  $u_1,\ldots,u_d$ be the predecessor nodes of $v$,
  with $d = d^-(v)$.
We obtain
  \begin{align*}
    \sum_{\textup{paths $P$ from
        $s_0$ to $v$}}
    e(P)
&=    \sum_{i=1}^d \left(
    \sum_{\textup{paths $P'$ from
        $s_0$ to $u_i$}}
    e(P' )\qback_{u_iv}\right)
\\
&=    \sum_{i=1}^d \left(
    \sum_{\textup{paths $P'$ from
        $s_0$ to $u_i$}}
    e(P' )\right)\qback_{u_iv}
=    \sum_{i=1}^d
     \qback_{u_iv}
    =1,
  \end{align*}
  since
  each parenthesized expression in
  the last line is~1,
  by the induction hypothesis,
and the backward proportions
  sum to~1, by assumption~\eqref{eq:backward}.
\end{proof}
If the 
  backward proportions $\qback _{uw}$ are nonnegative, there is an easy alternate
  proof, following Pitt \cite{Pitt1987}:
  We interpret  $\qback _{uw}$ as the transition probability of
  going 
  from $w$ to $u$ in a random path in the reverse DAG.
  Then the expression $
    e(P)$ is the probability of traversing the path $P$
    in reverse
    when
    starting from node $v$.
The sum $\sum
    e(P)$ in the lemma is the probability of arriving at the source $s_0$, which is of course~1.

\section{Experiments}
\label{sec:examples}
To test the accuracy of the estimation, we tried the algorithm on two
families of
DAGs whose size we know.
Our goal was not a systematic computational study, but
to get a first impression of the behavior of the basic method and its variations.
\iflong
The reader who is more interested in the theoretical properties of the
estimate is encouraged to skip the experimental sections with the
tables.
\fi
The \textsc{Python} programs by which these preliminary experiments
were conducted
are available in
\iflong a \else the \fi public repository\iflong\footnote
{\else\ \fi
  \url{https://github.com/guenterrote/estimate-DAG
  }\iflong}\fi,
together with tables
of the results of experiments
over a wider range of parameters.

In our experiments, we have only estimated the number of nodes (not
paths), with the standard backward proportions $\qback_{uv}=1/d^-(v)$
and uniform costs $c(v)=1$.

Some more interesting potential applications of the method are mentioned in \autoref
{sec:applications}.

\subsection{The permutation DAG}
\label{sec:permutations}

The nodes of the \emph{permutation DAG} are
the $n!$ permutations of the 
elements $\{1,2,\ldots,n\}$.
The source node is the decreasing permutation $n,n-1,\ldots,3,2,1$,
and the successors of a permutation are the permutations that
can be reached by swapping two adjacent elements that are out of
order, as in bubblesort.
\iflong
The sink is the sorted permutation $123\ldots n$.
\fi
Figure~\ref{fig:perm-path} shows a
random 
path
\iflong
from the source 54321
\fi
to the sink 12345
for $n=5$, leading
to the estimate $X\approx 25.6296$. (The true size is $5!=120$.)

\begin{figure}[htb]
  \centering
  \def\cross{{\tt\rlap/\char`\\}}
  \fboxsep=2pt
  \def\place{\null\hskip\fboxsep\setbox0=\hbox{0}\hskip0,5\wd0 }
  \dimen0=-4pt
  \dimen1=-3.2pt
  {\begin{minipage}{90mm}
\fbox{54321} \quad $d^+=4$, $d^-=0$, \ $W=1$, $X=1$\\[\dimen0]
\place\phantom{00}\cross\\[\dimen1]
\fbox{54231} \quad $d^+=3$, $d^-=1$, \ $W=4$, $X=5$\\[\dimen0]
\place\phantom{}\cross\\[\dimen1]
\fbox{45231} \quad $d^+=2$, $d^-=2$, \ $W=6$, $X=11$\\[\dimen0]
\place\phantom{000}\cross\\[\dimen1]
\fbox{45213} \quad $d^+=2$, $d^-=2$, \ $W=6$, $X=17$\\[\dimen0]
\place\phantom{00}\cross\\[\dimen1]
\fbox{45123} \quad $d^+=1$, $d^-=3$, \ $W=4$, $X=21$\\[\dimen0]
\place\phantom{0}\cross\\[\dimen1]
\fbox{41523} \quad $d^+=2$, $d^-=2$, \ $W=2$, $X=23$\\[\dimen0]
\place\phantom{}\cross\\[\dimen1]
\fbox{14523} \quad $d^+=1$, $d^-=3$, \ $W=1.3333$, $X=24.3333$\\[\dimen0]
\place\phantom{00}\cross\\[\dimen1]
\fbox{14253} \quad $d^+=2$, $d^-=2$, \ $W=0.6667$, $X=25$\\[\dimen0]
\place\phantom{0}\cross\\[\dimen1]
\fbox{12453} \quad $d^+=1$, $d^-=3$, \ $W=0.4444$, $X=25.4444$\\[\dimen0]
\place\phantom{000}\cross\\[\dimen1]
\fbox{12435} \quad $d^+=1$, $d^-=3$, \ $W=0.1481$, $X=25.5926$\\[\dimen0]
\place\phantom{00}\cross\\[\dimen1]
\fbox{12345} \quad $d^+=0$, $d^-=4$, \ $W=0.0370$, $X=25.6296$
  \end{minipage}}
  \caption{A path in the permutation DAG}
  \label{fig:perm-path}
\end{figure}

We have 
tested the algorithm for the permutation DAG for $n=8$ and $n=9$.
\autoref{tab:P-PERM} displays the average of up to $m=10^7$ repetitions of
Algorithm~P.
The true value $n!$ is displayed in the last row.
\iflong We see that the
\else
The
\fi
values gradually converge to an estimate that is in the right ballpark.
\iflong\else\looseness -1 \fi

We also show the empirical standard deviation
\iflong
(the square root of the
empirical variance)
\fi
of the first $m$
samples, as well as the value of the largest sample.
The last row gives the precise theoretical
value of the standard
deviation
of the random variable~$X$, computed by the method of
\iflong
\autoref{sec:variance}.
\else
\autoref{sec:variance-whole}.
\fi

\iflong We see that the
\else
The
\fi
standard deviation is rather high, at least one order
of magnitude higher than the mean.
We also see that the largest estimate is very much larger than the
mean.
After $m=1000$ estimates for $n=8$, we have seen an estimate of around
$4.83
\times 10^7$, leading to the outlier mean of 77,136.
These phenomena indicate that the estimate is
unreliable.
Indeed,
in the context of backtracking trees,
it has often been observed that, while the estimate produced by Path
Sampling is unbiased, it has a high uncertainty, 
and the information it
gives is sometimes 
useless.
\iflong\else\looseness -1 \fi

For such situations,
 Chatterjee and Diaconis
 \cite[Section~2
   ]{chatterjee-diaconis-18}
proposed to look at the quantity   
$$\hat\chi = \frac{\max\{X_1,\ldots,X_m\}}
{X_1+\cdots+X_m},$$
see also \cite[Eq.~\thetag{34}, p.~50]{kn4b}.
A value of $\hat\chi$ close to 1 indicates that the mean is
essentially determined by a single  sample that 
dominates the
other samples by far.
Chatterjee and Diaconis
argue that a reasonable policy is to trust the experiments when
$\hat\chi$ is small.
Our tables show the
Chatterjee--Diaconis score
$\hat\chi$ in the last column.
\iflong\else\looseness -1 \fi

\begin{table}[htb]
  \centering
  \par\noindent\begin{tabular}{|c|r@{}lll|r@{}lll|}
\hline
\multicolumn{9}{|c|}{Permutations}\\
\hline
&\multicolumn4{c|}{$n=8$}&\multicolumn4{c|}{$n=9$}\\
$m$ & mean &\,$\pm$\,std.\,dev.&\multicolumn 1 c{max.}&\multicolumn 1 {c|}{$\hat\chi$}& mean &\,$\pm$\,std.\,dev.&\multicolumn 1 c{max.}&\multicolumn 1 {c|}{$\hat\chi$}\\
\hline\vbox to 11pt{}%
$10^1$&$39{,}331.8$&$\,\pm\,59{,}061.9$& $1.42{\times}10^{5}$ & $0.36044$&$48{,}168.7$&$\,\pm\,56{,}348.3$& $1.68{\times}10^{5}$ & $0.34803$  \\
$10^2$&$28{,}772.5$&$\,\pm\,64{,}855.0$& $3.42{\times}10^{5}$ & $0.11875$&$134{,}918.1$&$\,\pm\,4.87{\times}10^{5}$& $4.21{\times}10^{6}$ & $0.31195$  \\
$10^3$&$77{,}136.0$&$\,\pm\,1.55{\times}10^{6}$& $4.83{\times}10^{7}$ & $0.62640$&$188{,}542.8$&$\,\pm\,1.33{\times}10^{6}$& $2.93{\times}10^{7}$ & $0.15514$  \\
$10^4$&$44{,}629.3$&$\,\pm\,6.57{\times}10^{5}$& $4.83{\times}10^{7}$ & $0.10826$&$221{,}679.5$&$\,\pm\,2.23{\times}10^{6}$& $1.49{\times}10^{8}$ & $0.06720$  \\
$10^5$&$42{,}382.8$&$\,\pm\,1.13{\times}10^{6}$& $2.89{\times}10^{8}$ & $0.06821$&$372{,}754.8$&$\,\pm\,1.75{\times}10^{7}$& $4.94{\times}10^{9}$ & $0.13250$  \\
$10^6$&$39{,}837.5$&$\,\pm\,8.32{\times}10^{5}$& $3.63{\times}10^{8}$ & $0.00912$&$347{,}645.5$&$\,\pm\,1.23{\times}10^{7}$& $4.98{\times}10^{9}$ & $0.01433$  \\
$10^7$&$40{,}820.6$&$\,\pm\,1.92{\times}10^{6}$& $5.16{\times}10^{9}$ & $0.01265$&$351{,}350.7$&$\,\pm\,2.59{\times}10^{7}$& $4.64{\times}10^{10}$ & $0.01320$  \\
\hline
\vbox to 11pt{}&$40{,}320\phantom{.0}$&$\,\pm\,\rlap{$1.642{\times}10^{6}$\ \ \ \#visited = 29}$&&&$362{,}880\phantom{.0}$&$\,\pm\,\rlap{$1.614{\times}10^{8}$\ \ \ \#visited = 37}$&&\\
\hline
\end{tabular}

  \caption{Algorithm P for the permutation DAG.
    The number of visited nodes is always $1+\binom n2$, equaling the number
    of layers, because
  all source-sink paths have the same length.}
  \label{tab:P-PERM}
\end{table}

\subsection{Cubes and Klee--Minty cubes}
We take the graph \iflong (i.\,e., the edge skeleton) \fi of the unit cube in $n$ dimensions and orient its
edges by some generic objective function.
It turns out that,  on these instances, Algorithm~P works
impeccably! At distance $j$ from the source, all
nodes have indegree $j$ and outdegree $n-j$. Hence the accumulated
contributions to $X$ are $1+n+\binom n2 +\iflong \binom n3 + \fi \cdots + \binom
nj + \cdots = 2^n$.
We always get the correct answer.\iflong\else\looseness -1 \fi

\label{sec:klee-minty}

Since the behavior of Algorithm~P for standard 
cubes is so
uninteresting, 
we apply the algorithm
to \emph{Klee--Minty cubes} instead.
Klee and Minty, in 1972, defined a perturbation of the unit $n$-cube
for which they could show that the simplex algorithm, with certain
pivot rules, visits all $2^n$ vertices and thus takes exponential
time~\cite{klee-minty-1972}. 

\autoref{fig:klee-minty} shows a version
of this polytope with integral vertices, inscribed in a
box with exponentially increasing side lengths rather than a cube.
The arcs are oriented in the direction of the last coordinate $x_n$.
 This orientation of the arcs agrees with the original Klee--Minty
 cubes.
Our version of the Klee--Minty cube in $n$ dimensions is defined by the following linear program:
\iflong\else\looseness -1 \fi
\begin{displaymath}
  \text{Maximize } x_n
\end{displaymath}
subject to 
 the following
 $2n$ constraints:
 \begin{equation}
   \label{eq:KM}
\begin{aligned}
  0\le x_1 &\le 1\\
  x_{i-1} \le x_i & \le 2^i-1-x_{i-1}\text{, \ for $i=2,\ldots,n$}
\end{aligned}
 \end{equation}
We see that the $n$-dimensional Klee--Minty ``box'' has a \emph{floor} facet and a
\emph{ceiling} facet, where $x_n$ is at its lower bound $x_n=x_{n-1}$
or at its upper bound $x_n=2^n-1-x_{n-1}$, respectively.
Both the floor and the ceiling %
are affine images of an $(n-1)$-dimensional
Klee--Minty box. The floor has the original orientations, and in the
ceiling, all orientations are reversed.\footnote
{Our  model of the Klee--Minty cube is a slight variation of
the integral model proposed by
Kitahara and Mizuno~\cite {KITAHARA-Klee-Minty-2011},
where the floor facets lie in the coordinate hyperplanes.
The coordinates $x_i$ in~\eqref{eq:KM}
are related to
the coordinates $u_i$ in their model 
by the unimodular transformation
$x_i = u_1 + \dots + u_i$.}







\begin{figure}[htb]
  \centering
  \includegraphics{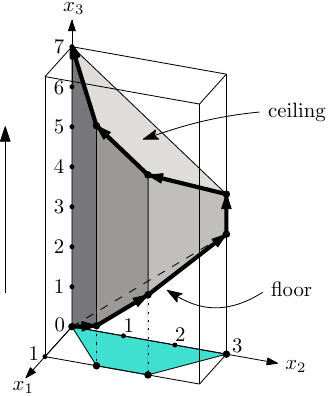}
  \caption{An integral Klee--Minty cube in $n=3$ dimensions with the
    directed path through all $2^n$ vertices. The edges are oriented upwards
    in the vertical $x_3$-direction.}
  \label{fig:klee-minty}
\end{figure}

\begin{table}[htb]
  \centering
  \par\noindent\begin{tabular}{|c|r@{}lll|r@{}lll|}
\hline
\multicolumn{9}{|c|}{Klee--Minty cubes}\\
\hline
&\multicolumn4{c|}{$n=10$}&\multicolumn4{c|}{$n=12$}\\
$m$ & mean &\,$\pm$\,std.\,dev.&\multicolumn 1 c{max.}&\multicolumn 1 {c|}{$\hat\chi$}& mean &\,$\pm$\,std.\,dev.&\multicolumn 1 c{max.}&\multicolumn 1 {c|}{$\hat\chi$}\\
\hline\vbox to 11pt{}%
$10^1$&$795.5$&$\,\pm\,2{,}170.0$& $6.96{\times}10^{3}$ & $0.87488$&$198.8$&$\,\pm\,335.8$& $1.05{\times}10^{3}$ & $0.53028$  \\
$10^2$&$277.6$&$\,\pm\,937.2$& $6.96{\times}10^{3}$ & $0.25073$&$771.3$&$\,\pm\,3{,}045.2$& $2.41{\times}10^{4}$ & $0.31309$  \\
$10^3$&$906.3$&$\,\pm\,11{,}238.4$& $2.85{\times}10^{5}$ & $0.31473$&$1{,}068.5$&$\,\pm\,10{,}497.3$& $2.81{\times}10^{5}$ & $0.26334$  \\
$10^4$&$485.6$&$\,\pm\,4{,}998.2$& $2.85{\times}10^{5}$ & $0.05874$&$1{,}902.3$&$\,\pm\,85{,}597.0$& $8.39{\times}10^{6}$ & $0.44087$  \\
$10^5$&$955.3$&$\,\pm\,90{,}694.9$& $2.81{\times}10^{7}$ & $0.29453$&$1{,}786.1$&$\,\pm\,60{,}458.6$& $9.70{\times}10^{6}$ & $0.05430$  \\
$10^6$&$795.1$&$\,\pm\,59{,}515.4$& $4.42{\times}10^{7}$ & $0.05565$&$2{,}206.7$&$\,\pm\,2.13{\times}10^{5}$& $1.52{\times}10^{8}$ & $0.06900$  \\
$10^7$&$909.0$&$\,\pm\,1.78{\times}10^{5}$& $4.89{\times}10^{8}$ & $0.05384$&$3{,}103.9$&$\,\pm\,2.13{\times}10^{6}$& $6.58{\times}10^{9}$ & $0.21208$  \\
\hline
\vbox to 11pt{}&$1{,}024\phantom{.0}$&$\,\pm\,\rlap{$3.246{\times}10^{9}$}$&&&$4{,}096\phantom{.0}$&$\,\pm\,\rlap{$1.034{\times}10^{19}$}$&&\\
&\multicolumn4{l|}{\#visited: avg\ = 15.7, max\ = 78}&\multicolumn4{l|}{\#visited: avg\ = 20.6, max\ = 100}\\
\hline
\end{tabular}

  \caption{Algorithm P for Klee--Minty cubes}
  \label{tab:P-KM}
\end{table}

Algorithm P essentially performs the simplex algorithm with the \emph{random-edge} pivot rule.
For the expected number of steps of this pivot rule on  Klee--Minty
cubes,
there is an easy upper bound of $O(n^2)$%
~\cite{Randomized-simplex-Klee-Minty-1998}.
A~matching lower bound
of $\Omega(n^2)$
is known only for a random start vertex~%
 \cite{Klee-Minty-random-edge-2007}.

\autoref{tab:P-KM} shows the results of two runs, with $n=10$ and
$n=12$ dimensions. In comparison to the permutation DAG, we see that
the estimate is less accurate, even though the cubes have far fewer nodes. The theoretical standard deviation is very large
(around $10^{19}$ for $n=12$), but this high standard deviation is not
observed empirically. \iflong
{The extreme clash between the predicted and the observed standard deviation
raises
the suspicion of an error in the formulas or a programming error.
However, for small values of $n$, the empirical standard deviation converges
very well to the predicted standard deviation.
The phenomenon is inherent in the data. The high value of the standard
deviation,
several orders of magnitude larger than the mean,
indicates that there is an unusual situation.
}
\fi

\iflong
\autoref{sec:KM-details} gives some details about
\iflong
the
representation that we used
\else
our representation
\fi
for the vertices of the
Klee--Minty cubes.
\fi

\iflong
\section{Computing the variance of \texorpdfstring{$X$}{X}}
\label{sec:variance}
\label{sec:variance-whole}

For test DAGs that are small enough so that one can iterate through
all nodes, it is possible to compute the variance of the random
estimate~$X$ for Algorithm~P exactly.

We use the following basic result about
a mixture of distributions:

\begin{lemma}[Mixture of distributions]
  \label{lem:mixture}
  Consider $d$ random variables $Z_1,\ldots,Z_d$ with means $\mu_i$ and
  variances $\var_i$.
  Let $Z$ be the mixture of these variables with proportions
  $\alpha_1,\ldots,\alpha_d$, that is, $Z$ is set to $Z_i$ with probability
  $\alpha_i$,
  where $  \sum_{i=1}^d \alpha_i=1$. Then $Z$ has mean
  \begin{displaymath}
    \mu = \sum_{i=1}^d \alpha_i\mu_i
  \end{displaymath}
  {and variance}
  \begin{equation}
\label{eq:mixture-variance}
    \var = \sum_{i=1}^d \alpha_i\bigl(\var_i+(\mu_i-\mu)^2\bigr)
        =         \sum_{i=1}^d \alpha_i\var_i
+ \sum_{1\le i<j\le d} \alpha_i \alpha_j(\mu_i-\mu_j)^2   .
  \end{equation}
\end{lemma}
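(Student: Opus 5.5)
The plan is to compute the first two moments of $Z$ by conditioning on which component is chosen, and then rewrite the variance in the two forms claimed. Let $I$ be the random index with $\Pr[I=i]=\alpha_i$, so that $Z=Z_I$.

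For the mean, the law of total expectation gives
\[
E[Z]=\sum_i \alpha_i E[Z_i]=\sum_i\alpha_i\mu_i=\mu .
\]

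For the second moment, conditioning in the same way gives
\[
E[Z^2]=\sum_i\alpha_i E[Z_i^2]=\sum_i\alpha_i(\var_i+\mu_i^2).
\]
Hence
\[
\var=E[Z^2]-\mu^2=\sum_i\alpha_i\var_i+\sum_i\alpha_i\mu_i^2-\mu^2 .
\]
Since $\sum_i\alpha_i=1$ and $\sum_i\alpha_i\mu_i=\mu$, we have
\[
\sum_i\alpha_i(\mu_i-\mu)^2=\sum_i\alpha_i\mu_i^2-2\mu\cdot\mu+\mu^2=\sum_i\alpha_i\mu_i^2-\mu^2 .
\]
This yields the first expression in \eqref{eq:mixture-variance}.

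The only step needing a little care is the second expression. It suffices to show the identity
\[
\sum_i\alpha_i\mu_i^2-\Bigl(\sum_i\alpha_i\mu_i\Bigr)^2=\sum_{i<j}\alpha_i\alpha_j(\mu_i-\mu_j)^2 .
\]
I will expand the right side symmetrically, as one half of the sum over all ordered pairs $(i,j)$:
\[
\tfrac12\sum_{i,j}\alpha_i\alpha_j(\mu_i^2-2\mu_i\mu_j+\mu_j^2).
\]
Using $\sum_j\alpha_j=1$, the two square terms each contribute $\tfrac12\sum_i\alpha_i\mu_i^2$, for a total of $\sum_i\alpha_i\mu_i^2$. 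The cross term gives $-\bigl(\sum_i\alpha_i\mu_i\bigr)^2=-\mu^2$. This is exactly the left side, which completes the argument.
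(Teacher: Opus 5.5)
Your proof is correct and is essentially the paper's argument. The paper only identifies the two formulas as instances of the law of total expectation and the law of total variance, conditioning on the index $Y$ of the chosen component; your computation of $E[Z^2]-\mu^2$ derives that law directly in this finite setting, and you also verify the pairwise form $\sum_{i<j}\alpha_i\alpha_j(\mu_i-\mu_j)^2$ via the symmetric double sum (whose diagonal terms vanish), which the paper states without checking.
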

These equations can be seen as
instances of
the \emph{law of
total expectation}
$E[Z]=E[E[Z|Y]]$
and
the \emph{law of
total variance}
$\var(Z)=
E[\var(Z|Y)] + \var(E[Z|Y]) 
$,
where $Y\in \{1,\ldots,d\}$ is a random variable that determines which
$Z_i$ is chosen.
The last expression in 
\eqref{eq:mixture-variance} is an alternative formula,
which
\iflong  is sometimes \else may be \fi more convenient 
for hand calculation when $d$ is small.

Let us apply this to  Algorithm~P.
For each node $u$, we define the random variable $D_u(W)$ to be the
 value of all future additions to~$X$, assuming
that $u$ is entered with weight $W$. Since the operations on $W$
consist of multiplications by random quantities that are independent
of~$W$, it is clear that
$D_u(W)$ is a linear function of
~$W$, i.\,e.,
\begin{displaymath}
  D_u(W) = F_uW,
\end{displaymath}
for some random factor $F_u$ that depends only on the graph~$G$,
the costs~$c(v)$, the forward proportions $\pforw_{uv}$, and the arc
weights.

We now show how the mean and the variance of the variables~$F_u$ can be computed
in 
reverse direction of the arcs,
proceeding from the end of the DAG towards the source.
Eventually we 
arrive at the final estimate
$X=F_{s_0}$, and $\var(F_{s_0})$ is the quantity that we are looking for.

Let $u$ be a node with $d$ successors $v_1,\ldots,v_d$.
Then $F_u$ is $c(u)$ plus a 
mixture
of the $d$ random variables $
 e_{uv_j}/\pforw_{uv_j}
\cdot F_{v_j}$, each taken
with probability $\alpha_j=\pforw_{uv_j}$.
Let $\mu_j$ and $\var_j$ denote the mean and variance of $F_{v_j}$.
Then, by Lemma~\ref{lem:mixture}, the mixture has mean
\begin{displaymath}
  \mu_0 = \sum_{j=1}^d \pforw_{uv_j} \cdot
  \frac{e_{uv_j}}{\pforw_{uv_j}} 
      \cdot \mu_j
      = \sum_{j=1}^d e_{uv_j}\mu_j
\end{displaymath}
and variance
\begin{align}
   \nonumber
  \var(F_u) &=
               \sum_{j=1}^d \pforw_{uv_j} \cdot \left[\left(
      \frac{e_{uv_j}}{\pforw_{uv_j}}
               \right)^2\cdot \var_j
    + \left( \frac{e_{uv_j}}{\pforw_{uv_j}} 
      \mu_j-\mu_0\right)^2\right]
  \\&
  \nonumber 
               = 
      \sum_{j=1}^d \frac1{\pforw_{uv_j}}
      \left[ 
  e_{uv_j}
  ^2\var_j
  + \left(e_{uv_j} \mu_j-\pforw_{uv_j}\mu_0\right)^2\right]
\end{align}
To get
the expectation of $F_u$, we must not forget to add the cost $c(u)$:
\begin{equation}
  \label{eq:EFu}
  E[F_u] = 
  c(u)+\mu_0
= c(u)+
  \sum_{j=1}^d e_{uv_j}\mu_j
\end{equation}
For a node $u$ with no successors, the formulas specialize to
$\var(F_u)
=0$ and
$E[F_u]=c(u)$.

For the special case of uniform forward proportions $\pforw_{uv_j}=1/d$, we get\iftrue\ the following
formulas:\else :\fi
\begin{displaymath}
\var(F_u) =   
d \cdot\sum_{j=1}^d  \left[
  e_{uv_j}^2\var_j
  + \left(e_{uv_j} \mu_j
    -\frac{\mu_0}d\right)^2\right]
=
d \cdot\sum_{j=1}^d  
  e_{uv_j}^2\var_j
  +
  \sum_{1\le i < j \le d}
  \left(
    e_{uv_i} \mu_i-    e_{uv_j} \mu_j
      \right)^2
\end{displaymath}

\iflong\else
\subsection{Combinatorial interpretation}
\label{sec:inter}

We can give a combinatorial interpretation of the expectation $E[F_v]$ of
the random variable $F_v$ introduced above: It is equal
to the quantity
$f_v = \sum _{w\in V} \mathcal{P}(v\to w) c(w)$ that we defined
in \eqref{eq:fv} for the analysis of Team Sampling
(Lemma~\ref{lem:martingale.deterministic}).

Comparing
\eqref{eq:EFu}
with \eqref{eq:fu=}
and taking note of the fact that $\mu_j$ stands
for $E[F_{v_j}]$, we see that
$E[F_v]$ and $f_v$ satisfy the same recurrence relation.
If $v$ is a sink,
\eqref{eq:EFu}
and \eqref{eq:fu=}
specify the same values
$E[F_v]=f_v=c(v)$, which can be used as initial conditions. 
 Thus:

\begin{proposition}  \label{lem:fr}
  The expected value of $F_v$ equals $f_v$.
  \qed
\end{proposition}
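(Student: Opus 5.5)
The plan is to show that $E[F_v]$ and $f_v$ satisfy the same recurrence with the same terminal values, and then conclude by induction over the DAG processed in reverse topological order. I take $f_v=\sum_{w\in V}\mathcal{P}(v\to w)\,c(w)$, where $\mathcal{P}(v\to w)$ denotes the total weight $\sum e(P)$ of all directed paths $P$ from $v$ to $w$. The empty path from $v$ to itself counts, with weight~$1$.

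\emph{Step 1: recurrence for $f_v$.} Split the paths starting at $v$ according to their first arc. The path of length zero contributes $c(v)$. Every other path from $v$ to $w$ begins with an arc $vv_j$ to one of the successors $v_1,\dots,v_d$ of $v$, and then continues as a path from $v_j$ to $w$. Since $e$ is multiplicative along paths, this gives
\begin{displaymath}
  \mathcal{P}(v\to w)=[v=w]+\sum_{j=1}^d e_{vv_j}\,\mathcal{P}(v_j\to w).
\end{displaymath}
Multiplying by $c(w)$ and summing over $w$ yields
\begin{displaymath}
  f_v=c(v)+\sum_{j=1}^d e_{vv_j}f_{v_j}.
\end{displaymath}
This is exactly the form of \eqref{eq:EFu} with $\mu_j$ replaced by $f_{v_j}$. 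If $v$ is a sink, the only path from $v$ is the empty one, so $f_v=c(v)$.

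\emph{Step 2: induction.} Induct on the length of the longest path starting at $v$. For a sink, $E[F_v]=c(v)=f_v$ by the remark after \eqref{eq:EFu}. Otherwise every successor $v_j$ has a strictly shorter longest outgoing path, so $\mu_j=E[F_{v_j}]=f_{v_j}$ by the induction hypothesis. Substituting this into \eqref{eq:EFu} and comparing with the recurrence from Step~1 gives $E[F_v]=f_v$.

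\emph{Main obstacle.} The only delicate point is making sure that \eqref{eq:EFu} itself is justified, namely that $F_v=c(v)+(e_{vv_j}/\pforw_{vv_j})F_{v_j}$ holds with probability $\pforw_{vv_j}$. This follows from the Markov nature of Algorithm~P: after moving to $v_j$, the future additions to $X$ depend only on $v_j$ and scale linearly with the current weight, exactly as used in defining $D_u(W)=F_uW$. Also, since the graph is finite and acyclic, the induction is well-founded.

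As a cross-check, the argument can be done without induction. It mirrors the proof of \autoref{th:unbiased-paths} with the starting node $s_0$ replaced by $v$. There, $E[F_v]=\sum_w E[W_w]\,c(w)$, where $W_w$ is the weight at $w$ for a walk started at $v$ with $W=1$. The lemma in that proof then gives $E[W_w]=\mathcal{P}(v\to w)$, because the factors $\pforw$ cancel along each path.
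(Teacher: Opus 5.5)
Your proposal is correct and is essentially the paper's argument. The paper likewise notes that $f_v$ and $E[F_v]$ satisfy the same recurrence, \eqref{eq:fu=} versus \eqref{eq:EFu}, with the same sink values $c(v)$, and then concludes by induction in reverse topological order; your derivation of \eqref{eq:fu=} by splitting off the first arc, and your cross-check via the path-weight lemma from the proof of \autoref{th:unbiased-paths} started at $v$, only spell out details the paper leaves implicit.
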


\fi

\fi
\section{Importance Sampling}
\label{sec:IS}

How can we use the freedom to pick the successor
according to an arbitrary distribution in order to get a better estimate?
For a rooted tree, it is known what the ideal forward
proportions
$\pforw_{uv}$ are: They should be proportional to the cost of
the subtrees rooted at the children $v$ of node~$u$. If the
forward proportions are chosen in this way, the estimate $X$ will
always deliver the correct answer.

In practice,
one can 
choose the probability $\pforw_{uv}$ of going from $u$ to $v$
 based on the properties of $v$ in the
graph, e.g., the number of successors or predecessors, or on
characteristics
of the combinatorial objects that the node $v$ represents.
We are entering the realm of heuristic arguments, and the
success of any particular rule depends on the type of problem.

We have tried the simple rule to choose the probabilities proportional
to the outdegrees of the successor nodes~$v$. (If the outdegree is 0,
we have taken 1 instead, to avoid zero probabilities.)
Nodes with large outdegree are more likely to reach many nodes.
For the permutation DAG, the
outdegree (and hence indegree) of a successor $v$ differs by at most
$\pm1$ from $u$, and hence the
effect of the outdegree tends to amplify.
This is why it seems to make sense to choose the outdegree as a criterion.

\begin{table}[htb]
  \centering
  \par\noindent\begin{tabular}{|c|r@{}lll|r@{}lll|}
\hline
\multicolumn{9}{|c|}{Permutations}\\
\hline
&\multicolumn4{c|}{$n=8$}&\multicolumn4{c|}{$n=9$}\\
$m$ & mean &\,$\pm$\,std.\,dev.&\multicolumn 1 c{max.}&\multicolumn 1 {c|}{$\hat\chi$}& mean &\,$\pm$\,std.\,dev.&\multicolumn 1 c{max.}&\multicolumn 1 {c|}{$\hat\chi$}\\
\hline\vbox to 11pt{}%
$10^1$&$10{,}564.8$&$\,\pm\,11{,}944.9$& $3.32{\times}10^{4}$ & $0.31394$&$234{,}362.6$&$\,\pm\,4.41{\times}10^{5}$& $1.28{\times}10^{6}$ & $0.54434$  \\
$10^2$&$15{,}677.9$&$\,\pm\,28{,}749.5$& $1.85{\times}10^{5}$ & $0.11808$&$315{,}348.3$&$\,\pm\,1.03{\times}10^{6}$& $6.34{\times}10^{6}$ & $0.20111$  \\
$10^3$&$47{,}912.4$&$\,\pm\,4.33{\times}10^{5}$& $1.23{\times}10^{7}$ & $0.25653$&$281{,}586.0$&$\,\pm\,1.10{\times}10^{6}$& $1.40{\times}10^{7}$ & $0.04980$  \\
$10^4$&$39{,}265.0$&$\,\pm\,2.24{\times}10^{5}$& $1.23{\times}10^{7}$ & $0.03130$&$312{,}165.1$&$\,\pm\,3.48{\times}10^{6}$& $2.36{\times}10^{8}$ & $0.07569$  \\
$10^5$&$40{,}970.8$&$\,\pm\,3.82{\times}10^{5}$& $6.63{\times}10^{7}$ & $0.01619$&$411{,}450.9$&$\,\pm\,2.05{\times}10^{7}$& $6.05{\times}10^{9}$ & $0.14698$  \\
$10^6$&$39{,}777.6$&$\,\pm\,4.31{\times}10^{5}$& $2.35{\times}10^{8}$ & $0.00592$&$360{,}495.0$&$\,\pm\,1.13{\times}10^{7}$& $6.05{\times}10^{9}$ & $0.01678$  \\
$10^7$&$40{,}289.8$&$\,\pm\,5.41{\times}10^{5}$& $7.25{\times}10^{8}$ & $0.00180$&$372{,}028.6$&$\,\pm\,4.02{\times}10^{7}$& $1.01{\times}10^{11}$ & $0.02708$  \\
\hline
\vbox to 11pt{}&$40{,}320\phantom{.0}$&$\,\pm\,\rlap{$6.316{\times}10^{5}$\ \ \ \#visited = 29}$&&&$362{,}880\phantom{.0}$&$\,\pm\,\rlap{$4.665{\times}10^{7}$\ \ \ \#visited = 37}$&&\\
\hline
\end{tabular}

  \caption{Algorithm P for permutations with Importance Sampling
    by outdegree}
  \label{tab:PERM-IS}
\end{table}

The results are shown in
\autoref{tab:PERM-IS}.
We see that
the theoretical standard deviation goes down by a factor between
2.6 and 3.5
in comparison with
\autoref{tab:P-PERM}.
In terms of empirical behavior, the comparison is not clear.
The experiments are
haunted by the occasional occurrences of rare events, such as unusually early appearances of high
estimates like $6.05 \times 10^9$ and
$1.01 \times 10^{11}$ for $n=9$.

We tried the same method also for Klee--Minty cubes, with similarly
inconclusive results.
\autoref{sec:importance} reports these results, as well as some further
experiments of only theoretical value, where the size of the reachable node set
is taken into account for the forward proportions.

\iflong
In another setting, namely the enumeration tree for pseudoline
arrangements,
it has been observed that the degrees of nodes and their children are
strongly correlated,
see \cite[\S9, 
Fig.~14]
{r-nert-25},
 suggesting that importance sampling should be helpful.
It indicates that
forward proportions chosen 
proportional to the outdegrees
would underestimate the true
size 
of the subtrees, 
and thus the
influence of the outdegrees on the probabilities should be even stronger than just proportional.
However,
an experimental evaluation has not been done yet.
A node in some deeper level of these enumeration trees has typically several hundred to thousands of successors.
\fi

\section{Employing a team of agents}
\label{sec:team}
\subsection{The Team Sampling family of algorithms (TS)}
\label{sec:TS}

Instead of \iflong making \fi many independent repetitions of following a single path with Algorithm~P,
it is better to let
several ``tokens'' or ``cooperative agents'' move through the graph
simultaneously.
\emph{In this section, we assume that the arc weights $e_{uv}$ are positive}.

We propose the following general scheme.
The algorithm maintains a list $A$ of \emph{active nodes}, and for each
element $v\in A$, a positive weight $W[v]$.

\medskip\par
\begin{centering}
\fboxsep=7pt \noindent\fbox
{\hskip -1.em \begin{minipage}[c]{0.99\textwidth}
\begin{enumerate}
\item[\rlap{\setbox0=\hbox{\bf 1.}\hskip -\wd0 \bf Algorithm TS: Team Sampling}]
\item[\bf 1.]
  [Initialize] $A:=\{s_0
  \}$; \
  $W[s_0
  ] := 1$; \ $X := 0$.
\item[\bf 2.] While $A$ is nonempty, perform either of the following two steps:
\begin{enumerate}
\item [\bf 2E.] [Expand successors]
  Choose an arbitrary node $u$ of $A$,
  and remove it 
  from $A$.
  
  Set $X := X + W[u]\cdot 
  c(u)$. /\!/ Augment $X$
  

  For each successor $v$ of $u$:\\
\null  \qquad
If $v$ is not 
present in $A$,
insert $v$ into $A$ and clear its weight: $W[v] := 0$.
\\\null\qquad Then,
increase its weight to
$W[v] := W[v] +
W[u] e_{uv}
$.

\item [\bf 2C.] [
  Condense]
   Choose an {arbitrary} subset $S$ of $A$ with at least 2
   elements.
   
   Pick a random node $u$ from $S$, with probabilities proportional to
the weights~$W[u]$.
   
   Replace $S$ in $A$ by the single element $u$, and set $W[u]$ to the
   sum of the weights in $S$.

\end{enumerate}  
 \item [\bf 3.] [Terminate] In the end, return $X$.
\end{enumerate}
\end{minipage}}\par\end{centering}\goodbreak\par\bigskip

The choice between
 Step 2E and Step 2C, and the
choice of $u$ in Step 2E or of $S$ in Step 2C can be done according to some
rule, depending on any information that has been stored or that is
currently at hand, or even randomly. The node weights can be taken into account.
We prove
in \autoref{sec:TS-correctness}
that, whatever choices the algorithm makes, it returns  an unbiased
estimate $X$ of the quantity $c_e(G)$ that was defined
in \autoref
{th:unbiased-paths}.

Algorithm P can be seen as a special case of this algorithm where Step~2E is always directly followed
by Step~2C for the whole set $A$, i.\,e., the successors of the current
node.
(To make this correspondence precise,
the forward proportions $\pforw_{uv}$ must be chosen proportional to
$e_{uv}$:
When Step~2C is applied to the successors $v$ of the current node~$u$,
whose weights are $W[u]\,e_{uv}$, it then picks $v$ with
probability $\pforw_{uv}=e_{uv}/\sum_w e_{uw}$, and the new weight is $W[u]\sum_w
e_{uw} = W[u]\,e_{uv}/\pforw_{uv}$, in accordance with the
update formula (U$_e$).)
%

In contrast to Path Sampling,  Algorithm TS as formulated
does not allow
 an arbitrary probability distribution for selecting
 among the successors.
\autoref{sec:mix} will show how 
an arbitrary nonuniform selection
can be  accommodated by
 generalizing
the condensation step~2C.

It can happen that
a node $v$ is removed from $A$ and later reinserted.
For example, in the graph of
\autoref{fig:example},
after initially expanding $v_1$ into the set $A=\{v_2,v_3\}$,
the algorithm could first expand~$v_3$, and later $v_2$, reinserting
$v_3$ into $A$.
This is obviously a waste of effort, and it can be avoided by choosing the nodes in Step~2E in some
topological order of the DAG.
In many applications, it is easy to obey this rule.
For example, the permutation DAG is graded: All paths from the source
to a node~$v$ have the same length.
More generally, we may have some rank function $r\colon V\to \mathbb
Z$
or $r\colon V\to \mathbb R$ such that $r(v)>r(u)$ for all arcs $uv$.
When we have the graph of a polytope, oriented by some generic linear objective
function~$f$, then $f$ 
can serve as
a rank function for the nodes.
In other instances, it may not be straightforward to find out if a
node $v$ is reachable from a node~$u$. In any case,
 the correctness of the algorithm does not
depend on any proper order in which the nodes are chosen.

%

\iflong

\paragraph{Keeping multiple copies of the same node.}
In the above 
formulation of Step 2E, we check the presence of a node
$v$ in $A$ before trying to insert it, but this is actually not 
necessary.
We can also insert a fresh copy of $v$ into $A$ with a separate weight
$W[v]$,
(which is initialized to~0 as usual).
Accordingly,
we have to treat $A$ as a
\emph{multiset}.
When such a duplication is detected, it is of course
expedient to condense multiple copies of the same node $v$, as a
special case of Step~2C; but again, the correctness of the algorithm does not
depend on this.

\fi

\medbreak

The general scheme of Team Sampling offers a lot of freedom.
We describe and test two strategies that specialize this scheme,
\emph{Bucket Sampling} and \emph{Crowd Sampling}, both
inspired by methods that have been 
proposed
for trees in the literature.

\enlargethispage{-1em}

\subsection{Bucket Sampling (Algorithm BS)}
\label{sec:bucket}
This strategy is modeled after the
algorithm of
Pang C. Chen
\cite{Chen-sampling-92} from 1992, which was proposed for trees under the name
\emph{heuristic sampling}; Knuth 
called it \emph{Chen subset sampling}
\cite[Section 7.2.2.9, 
Algorithm~C%
]{kn4c}.
Nodes with similar
characteristics are lumped together in a group~$S$ that is to 
be
condensed.
The idea is that,
if the groups are formed in a sensible way, any element $u\in S$ is as good a representative of $S$ as
any other element.

 Formally,
 the algorithm uses a \emph{heuristic function}
$h\colon V\to U$ from the nodes $V$ to some universe~$U$, which is
used
to identify 
the
 \emph{buckets} to which the nodes are assigned.
Ideally, the criterion 
by which nodes are grouped would be that the nodes
\iflong
have similar subtrees (when $G$ is a tree) or similar sets of successors
(when $G$ is a DAG).
\else
have similar sets of successors.
\fi
However,
these characteristics are usually not available or expensive to compute, so
one has to 
settle for some easier ``heuristic'' characteristics.
In our experiments, we have chosen the outdegree as the heuristic function.

In the original algorithm of
 Chen
\cite{Chen-sampling-92}, 
a node and its ancestor node cannot be in one bucket, and
what we call a bucket is called a \emph{stratum}.
\iflong
This requirement, together with a proper order of processing the strata,
ensures that the algorithm visits at most one node per stratum.
With our relaxed notion
of heuristic function, the stratum terminology is less appropriate,
and we have chosen the term bucket instead.
\fi

Depending on the character of the universe~$U$, the list of buckets can be
organized in an array, a hash table, or a priority queue.
\iflong\par
Conceptually, each bucket could store a list containing the (multi-)set of nodes $v$ with
the same value $h(v)$. \emph{The algorithm repeatedly chooses some nonempty
bucket, condenses it, and expands the chosen representative.}
\fi
\iflong

However, the following considerations, which were already made
for the original heuristic sampling method of~\cite{Chen-sampling-92},
show that one only needs to store at most one node per bucket.
Suppose we
condense a set $S=\{u_1,u_2,u_3\}$, resulting in a
representative  $u'\in S
$, and later condense
the set $\{u',u_4,u_5,u_6\}$.
It can be easily checked that the total effect is the same as
condensing the set 
$\{u_1,u_2,u_3,u_4,u_5,u_6\}$ in one go.
Thus, if we know that $u_1,u_2,u_3$ will eventually be condensed (possibly
together with more elements that are not yet known), it makes sense to
condense them right away, in order to save space.

Therefore, we store at most one node per bucket.  In the expansion
step, we generate successor nodes and insert them into their buckets.
Whenever a node is inserted into a bucket that is already
occupied, we immediately condense the set of these two nodes.
(This procedure corresponds to the \emph{reservoir sampling} method of selecting
a sample from a data stream.)
\iflong\else\looseness -1 \fi

The resulting algorithm with the modified main loop is shown here:
\bigskip
\medskip\par
\begin{centering}
\fboxsep=7pt \noindent\fbox
{\hskip -1.em \begin{minipage}[c]{0.96\textwidth}
\begin{enumerate}
\item[\rlap{\setbox0=\hbox{\bf 1.}\hskip -\wd0 \bf
    Algorithm BS: Bucket Sampling}]
\item[\bf 1.]
  [Initialize] $A:=\{s_0
  \}$; \
  $W[s_0
  ] := 1$; \ $X := 0$.
\item[\bf 2.] While $A$ is nonempty, perform the following step:
\begin{description}
\item [\bf BS-2E.] [Expand successors]
  Choose an arbitrary node $u$ of $A$,
  and remove it 
  from $A$.
  
  Set $X := X + W[u]\cdot 
  c(u)$. /\!/ Augment $X$
  

  For each successor $v$ of $u$:\\
\null  \qquad
If $A$ contains no node $v'$ with $h(v)=h(v')$,
\\
\null  \qquad  \qquad
Insert $v$ into $A$ and
set $W[v] := 
W[u] e_{uv}
$.
\\\null  \qquad
Otherwise, i.e., if $A$ contains a node $v'$ with $h(v)=h(v')$,
\\\null  \qquad \qquad
 [Condense $v$ and $v'$]
 Compute $\bar W
 := W[v'] +
W[u] e_{uv}
$.
 \\\null  \qquad \qquad
With probability
$W[v']/\bar W$,
set
 $W[v'] := \bar W
$ (and ignore $v$).
\\\null  \qquad \qquad
With the complementary probability, 
\\\null  \qquad \qquad \qquad
replace $v'$ by $v$ in $A$,
and set $W[v] := \bar W$.
\end{description}  
 \item [\bf 3.] [Terminate] In the end, return $X$.
\end{enumerate}
\end{minipage}}\par\end{centering}\goodbreak\par
\bigskip

\else
As observed already
for the original heuristic sampling method of~\cite{Chen-sampling-92},
it is sufficient to store at most one node per bucket.  In the expansion
step, we generate successor nodes and insert them into their buckets.
Whenever a node is inserted into a bucket that is already
occupied, we immediately condense the set of these two nodes.
\iflong
(This procedure corresponds to the \emph{reservoir sampling} method of selecting
a sample from a data stream.)
\fi
\iflong\else\looseness -1 \fi

\fi

\enlargethispage{-1em}
\paragraph{Experiments.}

We have tried this approach for the permutation DAG and the Klee--Minty
cubes.
As a heuristic function, we have taken the outdegree.
In each step, we have selected the nonempty bucket with the largest
outdegree and expanded the node stored in this bucket.
The rationale behind this choice is that nodes closer to the root tend to have
larger outdegree.
\iflong
For the permutation DAG, it would have been natural to use
the level  (rank) as
additional information, but we have ignored this for
simplicity.\footnote
{Some preliminary experiments (not reported here) suggest that
  bucketing by level \emph{and} outdegree gives a substantial
  improvement:
It reduces the standard deviation by a factor of 4 to 6, and it markedly reduces the number of visited nodes.}
\fi

The results are shown in
Tables~\ref{tab:PERM-BS}
and~\ref{tab:KM-BS}.
The number of iterations (reported as ``\#visited'') is larger than
for Path Sampling, and in addition, each visit
incurs more work
than for Path Sampling; therefore we went only up to $m=10^6$
repetitions.
One can see that Bucket Sampling with our very simple bucketing strategy
improves
the estimate.
Bucket Sampling reduces the empirical standard deviation by a factor
of about 2 to 12 in comparison with Algorithm P, but it remains larger
than the mean.

We also report as ``\#generated'' the total number of successor nodes
that are generated and inserted into buckets; this quantity is
representative of the total work of the algorithm.
\iflong
Our digraphs are regular graphs, of degree $n-1$ for the permutation
DAGs, and of degree $n$ for the Klee--Minty cubes. 
 On average, half of the incident arcs are outgoing arcs.
Thus the number of generated nodes should be roughly the number of visited
nodes times $(n-1)/2$ or $n/2$, respectively.
The actual number of generated nodes
(``\#generated'')
is just slightly smaller than what one
would
expect from this calculation.
\fi

\begin{table}[htb]
  \centering
  \par\noindent\begin{tabular}{|c|r@{}lll|r@{}lll|}
\hline
\multicolumn{9}{|c|}{Permutations}\\
\hline
&\multicolumn4{c|}{$n=8$}&\multicolumn4{c|}{$n=9$}\\
$m$ & mean &\,$\pm$\,std.\,dev.&\multicolumn 1 c{max.}&\multicolumn 1 {c|}{$\hat\chi$}& mean &\,$\pm$\,std.\,dev.&\multicolumn 1 c{max.}&\multicolumn 1 {c|}{$\hat\chi$}\\
\hline\vbox to 11pt{}%
$10^1$&$41{,}711.3$&$\,\pm\,68{,}536.3$& $2.12{\times}10^{5}$ & $0.50920$&$120{,}931.2$&$\,\pm\,1.24{\times}10^{5}$& $4.52{\times}10^{5}$ & $0.37377$  \\
$10^2$&$39{,}047.5$&$\,\pm\,1.40{\times}10^{5}$& $1.36{\times}10^{6}$ & $0.34866$&$258{,}487.9$&$\,\pm\,9.36{\times}10^{5}$& $8.60{\times}10^{6}$ & $0.33284$  \\
$10^3$&$34{,}062.1$&$\,\pm\,89{,}284.0$& $1.73{\times}10^{6}$ & $0.05073$&$398{,}791.7$&$\,\pm\,3.63{\times}10^{6}$& $1.08{\times}10^{8}$ & $0.27097$  \\
$10^4$&$38{,}986.7$&$\,\pm\,1.34{\times}10^{5}$& $5.97{\times}10^{6}$ & $0.01532$&$437{,}469.8$&$\,\pm\,1.12{\times}10^{7}$& $1.09{\times}10^{9}$ & $0.25028$  \\
$10^5$&$39{,}853.5$&$\,\pm\,2.02{\times}10^{5}$& $4.61{\times}10^{7}$ & $0.01157$&$357{,}426.5$&$\,\pm\,4.90{\times}10^{6}$& $1.09{\times}10^{9}$ & $0.03063$  \\
$10^6$&$40{,}025.5$&$\,\pm\,2.55{\times}10^{5}$& $1.28{\times}10^{8}$ & $0.00321$&$360{,}984.5$&$\,\pm\,5.31{\times}10^{6}$& $2.67{\times}10^{9}$ & $0.00740$  \\
\hline
\vbox to 11pt{}&$40{,}320\phantom{.0}$&$\rlap{}$&&&$362{,}880\phantom{.0}$&$\rlap{}$&&\\
&\multicolumn4{l|}{\#visited: avg\ = 84.4, max\ = 277}&\multicolumn4{l|}{\#visited: avg\ = 180.7, max\ = 763}\\
&\multicolumn4{l|}{\#generated: avg\ = 290.6}&\multicolumn4{l|}{\#generated: avg\ = 690.5}\\
\hline
\end{tabular}

  \caption{Bucket Sampling for the permutation DAG.}
  \label{tab:PERM-BS}
\end{table}

\begin{table}[htb]
  \centering
  \par\noindent\begin{tabular}{|c|r@{}lll|r@{}lll|}
\hline
\multicolumn{9}{|c|}{Klee--Minty cubes}\\
\hline
&\multicolumn4{c|}{$n=10$}&\multicolumn4{c|}{$n=12$}\\
$m$ & mean &\,$\pm$\,std.\,dev.&\multicolumn 1 c{max.}&\multicolumn 1 {c|}{$\hat\chi$}& mean &\,$\pm$\,std.\,dev.&\multicolumn 1 c{max.}&\multicolumn 1 {c|}{$\hat\chi$}\\
\hline\vbox to 11pt{}%
$10^1$&$426.2$&$\,\pm\,297.3$& $1.01{\times}10^{3}$ & $0.23671$&$2{,}912.5$&$\,\pm\,3{,}257.5$& $9.80{\times}10^{3}$ & $0.33636$  \\
$10^2$&$1{,}002.6$&$\,\pm\,2{,}417.0$& $2.08{\times}10^{4}$ & $0.20746$&$2{,}805.3$&$\,\pm\,3{,}774.8$& $2.08{\times}10^{4}$ & $0.07424$  \\
$10^3$&$975.0$&$\,\pm\,1{,}852.5$& $2.49{\times}10^{4}$ & $0.02554$&$4{,}418.8$&$\,\pm\,20{,}658.1$& $4.78{\times}10^{5}$ & $0.10808$  \\
$10^4$&$970.3$&$\,\pm\,2{,}375.3$& $8.89{\times}10^{4}$ & $0.00916$&$3{,}838.8$&$\,\pm\,17{,}371.0$& $1.21{\times}10^{6}$ & $0.03150$  \\
$10^5$&$1{,}035.6$&$\,\pm\,8{,}173.8$& $2.13{\times}10^{6}$ & $0.02057$&$3{,}982.0$&$\,\pm\,27{,}084.6$& $4.81{\times}10^{6}$ & $0.01209$  \\
$10^6$&$1{,}024.6$&$\,\pm\,4{,}759.7$& $2.13{\times}10^{6}$ & $0.00208$&$4{,}106.5$&$\,\pm\,50{,}004.1$& $3.63{\times}10^{7}$ & $0.00884$  \\
\hline
\vbox to 11pt{}&$1{,}024\phantom{.0}$&$\rlap{}$&&&$4{,}096\phantom{.0}$&$\rlap{}$&&\\
&\multicolumn4{l|}{\#visited: avg\ = 193.2, max\ = 1105}&\multicolumn4{l|}{\#visited: avg\ = 542.4, max\ = 4291}\\
&\multicolumn4{l|}{\#generated: avg\ = 937.0}&\multicolumn4{l|}{\#generated: avg\ = 3092.3}\\
\hline
\end{tabular}

  \caption{Bucket Sampling for Klee--Minty cubes.}
  \label{tab:KM-BS}
\end{table}

\enlargethispage{-1em}

\subsection{Crowd Sampling (Algorithm CS)}
\label{sec:crowd}
This strategy is inspired by the
\emph{stochastic enumeration} (SE)
algorithm, which was proposed
in 2013 by
Reuven Y. Rubinstein~\cite{Rubinstein-2013}
to count the number of nodes of
a tree.
The algorithm has a budget parameter $B$, imposing a limit on the number
of nodes that we want to keep for further expansion. The algorithm
consists of an alternating sequence of expansion and condensation
phases. The main loop is as follows.

\goodbreak

\begin{description}
\item [\bf CS-2E] [Expansion]\\* Expand every node of the current set $A$.
  \\Let $A'$ denote the resulting multi-set (or set, after identifying duplicates).
\item [\bf CS-2C] [Condensation]\\
  If $|A'|\le B$, let $A:= A'$.
  \\
  Otherwise, arbitrarily 
  partition $A'$ into $B$ sets $S_1\cup \cdots \cup S_B$ of
  size
  $\lfloor \frac
  {|A'|}B
  \rfloor$
  or  $\lceil \frac{|A'|}B \rceil$.
  Condense each set $S_i$ to a single element, and let $A$ be the set
  of
  these $B$ elements.
\end{description}

We call this method \emph{Crowd Sampling}, because the samples are
treated more like an unorganized crowd
than a cooperating team.

The purpose of Step~CS-2C is to reduce the size of the active set $A'$ to $B$.
The method we have proposed is a straightforward way to achieve this.
In our implementation, we permute the set $A'$ randomly and then
partition it into sets of the required size.
One can think of simpler methods, like a round-robin distribution in
the order in which the elements happen to be stored in $A'$,
or more sophisticated methods, taking into account
the weights, for example.
Unbiasedness does not depend on the way how the partition is carried out.

Tables~\ref{tab:PERM-CS}
and~\ref{tab:KM-CS} show the results of this method with a budget of
$B=10$ elements. We also report
the number of nodes that were saved
by condensing multiple copies of the same node after the expansion step
(``\#saved'').
Compared to our Bucket Sampling experiments
(Tables~\ref{tab:PERM-BS}--\ref{tab:KM-BS}), we see that the standard
deviation is further reduced, at the expense of a larger number of
visited nodes. The  number of
visited nodes can be controlled by adjusting the parameter~$B$.

\iflong
There is another partial backtracking method, proposed by Purdom~\cite{purdom-78}, which
pursues a similar goal of examining every level of a tree
at about the same frequency. It is discussed in \autoref{sec:purdom}.
\fi

\begin{table}[htb]
  \centering
  \par\noindent\begin{tabular}{|c|r@{}lll|r@{}lll|}
\hline
\multicolumn{9}{|c|}{Permutations}\\
\hline
&\multicolumn4{c|}{$n=8$}&\multicolumn4{c|}{$n=9$}\\
$m$ & mean &\,$\pm$\,std.\,dev.&\multicolumn 1 c{max.}&\multicolumn 1 {c|}{$\hat\chi$}& mean &\,$\pm$\,std.\,dev.&\multicolumn 1 c{max.}&\multicolumn 1 {c|}{$\hat\chi$}\\
\hline\vbox to 11pt{}%
$10^1$&$51{,}102.7$&$\,\pm\,43{,}393.4$& $1.56{\times}10^{5}$ & $0.30491$&$199{,}951.9$&$\,\pm\,1.45{\times}10^{5}$& $4.06{\times}10^{5}$ & $0.20327$  \\
$10^2$&$43{,}250.2$&$\,\pm\,33{,}729.6$& $1.56{\times}10^{5}$ & $0.03603$&$370{,}850.2$&$\,\pm\,3.97{\times}10^{5}$& $2.05{\times}10^{6}$ & $0.05538$  \\
$10^3$&$43{,}252.0$&$\,\pm\,39{,}430.2$& $3.70{\times}10^{5}$ & $0.00855$&$317{,}016.9$&$\,\pm\,3.77{\times}10^{5}$& $3.08{\times}10^{6}$ & $0.00971$  \\
$10^4$&$40{,}068.4$&$\,\pm\,35{,}080.8$& $7.40{\times}10^{5}$ & $0.00185$&$359{,}232.3$&$\,\pm\,5.80{\times}10^{5}$& $1.84{\times}10^{7}$ & $0.00513$  \\
$10^5$&$40{,}222.5$&$\,\pm\,35{,}171.4$& $8.77{\times}10^{5}$ & $0.00022$&$362{,}323.6$&$\,\pm\,5.76{\times}10^{5}$& $2.37{\times}10^{7}$ & $0.00066$  \\
$10^6$&$40{,}259.1$&$\,\pm\,35{,}177.0$& $1.37{\times}10^{6}$ & $0.00003$&$361{,}797.9$&$\,\pm\,5.81{\times}10^{5}$& $7.45{\times}10^{7}$ & $0.00021$  \\
\hline
\vbox to 11pt{}&$40{,}320\phantom{.0}$&$\rlap{}$&&&$362{,}880\phantom{.0}$&$\rlap{}$&&\\
&\multicolumn4{l|}{\#visited: avg\ = 261.1, max\ = 266}&\multicolumn4{l|}{\#visited: avg\ = 341.8, max\ = 348}\\
&\multicolumn4{l|}{\#generated: avg\ = 893.4}&\multicolumn4{l|}{\#generated: avg\ = 1329.4}\\
&\multicolumn4{l|}{\#saved: avg\ = 178.7 = 20.0\%}&\multicolumn4{l|}{\#saved: avg\ = 240.4 = 18.1\%}\\
\hline
\end{tabular}

  \caption{Crowd Sampling for the permutation DAG with a budget $B=10$.}
  \label{tab:PERM-CS}
\end{table}

\begin{table}[htb]
  \centering
  \par\noindent\begin{tabular}{|c|r@{}lll|r@{}lll|}
\hline
\multicolumn{9}{|c|}{Klee--Minty cubes}\\
\hline
&\multicolumn4{c|}{$n=10$}&\multicolumn4{c|}{$n=12$}\\
$m$ & mean &\,$\pm$\,std.\,dev.&\multicolumn 1 c{max.}&\multicolumn 1 {c|}{$\hat\chi$}& mean &\,$\pm$\,std.\,dev.&\multicolumn 1 c{max.}&\multicolumn 1 {c|}{$\hat\chi$}\\
\hline\vbox to 11pt{}%
$10^1$&$698.9$&$\,\pm\,362.5$& $1.49{\times}10^{3}$ & $0.21311$&$1{,}747.4$&$\,\pm\,1{,}241.1$& $4.00{\times}10^{3}$ & $0.22862$  \\
$10^2$&$1{,}010.6$&$\,\pm\,1{,}053.9$& $6.45{\times}10^{3}$ & $0.06381$&$3{,}533.0$&$\,\pm\,6{,}814.3$& $5.70{\times}10^{4}$ & $0.16136$  \\
$10^3$&$1{,}006.2$&$\,\pm\,1{,}586.2$& $2.99{\times}10^{4}$ & $0.02973$&$3{,}861.8$&$\,\pm\,11{,}011.9$& $2.55{\times}10^{5}$ & $0.06613$  \\
$10^4$&$1{,}023.7$&$\,\pm\,1{,}559.8$& $5.64{\times}10^{4}$ & $0.00551$&$4{,}172.8$&$\,\pm\,12{,}451.9$& $5.04{\times}10^{5}$ & $0.01208$  \\
$10^5$&$1{,}019.5$&$\,\pm\,1{,}446.9$& $7.24{\times}10^{4}$ & $0.00071$&$4{,}107.1$&$\,\pm\,11{,}862.7$& $8.52{\times}10^{5}$ & $0.00207$  \\
$10^6$&$1{,}022.6$&$\,\pm\,1{,}490.0$& $2.40{\times}10^{5}$ & $0.00023$&$4{,}104.1$&$\,\pm\,13{,}739.7$& $3.46{\times}10^{6}$ & $0.00084$  \\
\hline
\vbox to 11pt{}&$1{,}024\phantom{.0}$&$\rlap{}$&&&$4{,}096\phantom{.0}$&$\rlap{}$&&\\
&\multicolumn4{l|}{\#visited: avg\ = 734.0, max\ = 1381}&\multicolumn4{l|}{\#visited: avg\ = 971.0, max\ = 1880}\\
&\multicolumn4{l|}{\#generated: avg\ = 2559.3}&\multicolumn4{l|}{\#generated: avg\ = 4012.0}\\
&\multicolumn4{l|}{\#saved: avg\ = 408.3 = 16.0\%}&\multicolumn4{l|}{\#saved: avg\ = 433.3 = 10.8\%}\\
\hline
\end{tabular}

  \caption{Crowd Sampling for Klee--Minty cubes with a budget $B=10$.}
  \label{tab:KM-CS}
\end{table}

\subsection{Correctness of the Team Sampling scheme}
\label{sec:TS-correctness}

\iflong
We use the notation
$\mathcal{P}(v\to w)$ for the sum of all path weights from $v$ to $w$:
\begin{displaymath}
  \mathcal{P}(v\to w)=     \sum_{\textup{paths $P$ from
        $v$ to $w$}}
    e(P)
\end{displaymath}
\else
We use the notation
$
  \mathcal{P}(v\to w)=     \sum_{\textup{paths $P$ from
        $v$ to $w$}}
    e(P)
$ for the sum of path weights. 
\fi

\begin{lemma}
  \label{lem:martingale.deterministic}
Suppose that at some stage in Algorithm TS, we have a current
estimate
$X^{\mathrm{current}}$ and a current active set $A$ with weights $W[v]$
for $v\in A$.
If we
perform only expansion steps (Step~2E)
 until the end of the algorithm,
 then the
 value of the 
 estimate
 $X$
 at termination can be predicted deterministically, and it is
\begin{equation}
  \label{eq:outcome-det}
  X^{\mathrm{current}} + \sum_{v\in A}
W[v]f_v,
\end{equation}
for some factors $f_v$ 
that depend only on the graph $G$, the costs $c(v)$,  and
the arc weights~$e_{vw}$\textup:%
\begin{equation}
  \label{eq:fv}
  f_v := \sum _{w\in V} \mathcal{P}(v\to w) c(w)
\end{equation}
In particular, an expansion step does not change the
 value 
 of~\eqref{eq:outcome-det}.
\end{lemma}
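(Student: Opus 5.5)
The plan is to prove the last claim first: a single expansion step does not change the potential
\[
\Phi := X^{\mathrm{current}} + \sum_{v\in A} W[v] f_v .
\]
The deterministic prediction then follows by induction on the number of remaining steps, since the procedure ends when $A=\emptyset$, where $\Phi=X$.

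The key identity is a one-step recurrence for $f_u$. Every path from $u$ to $w$ is either the trivial path (when $w=u$, with weight $1$), or it starts with an arc $uv$ followed by a path from $v$ to $w$. Hence
\[
\mathcal{P}(u\to w)=[u=w]+\sum_{v} e_{uv}\,\mathcal{P}(v\to w),
\]
where $v$ ranges over the successors of $u$. Multiplying by $c(w)$ and summing over $w$ gives
\[
f_u = c(u) + \sum_{v} e_{uv} f_v .
\]
All sums are finite because $G$ is finite and acyclic. For a sink this reads $f_u=c(u)$.

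Now consider Step~2E applied to $u\in A$. It removes the term $W[u]f_u$ and adds $W[u]c(u)$ to $X$. For each successor $v$ it increases $W[v]$ by $W[u]e_{uv}$. Setting a weight to $0$ before inserting $v$ contributes nothing, and in the multiset variant a fresh copy also starts at $0$. So the net change of $\Phi$ is
\[
-W[u]f_u + W[u]c(u) + \sum_v W[u]e_{uv} f_v = 0,
\]
by the recurrence.

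It remains to check that a sequence of expansions only terminates with $A=\emptyset$ and actually does terminate. The second point is the only mildly delicate step, because nodes may be reinserted after removal. To handle it I would use a finite potential: give each active node $v$ the weight $N(v)$, the number of paths starting at $v$, which is finite since $G$ is a finite DAG. Expanding $u$ replaces $N(u)=1+\sum_v N(v)$ by at most $\sum_v N(v)$ (merging duplicates only lowers this). So $\sum_{v\in A}N(v)$ strictly decreases and the process stops after finitely many expansions. At that point $A=\emptyset$ and $X=\Phi$, and $\Phi$ equals its current value~\eqref{eq:outcome-det}.

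This final value does not depend on the order in which nodes are expanded, which justifies calling it deterministically predictable. The $f_v$ defined in~\eqref{eq:fv} depend only on $G$, $c$ and $e$.
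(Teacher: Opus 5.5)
Your proposal is correct and follows the paper's proof. Both show that a single expansion step leaves $X^{\mathrm{current}}+\sum_{v\in A}W[v]f_v$ unchanged via the recurrence $f_u=c(u)+\sum_v e_{uv}f_v$, then conclude from $A=\emptyset$ at termination; your termination argument with the path-count potential $\sum_{v\in A}N(v)$ is a sound addition that the paper simply takes for granted.
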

\begin{proof}
  It is clear that the last sentence implies the whole lemma, since,
  at termination, $A=\emptyset$, and 
  $ 
  X^{\mathrm{current}}$ is the final value of~$X$.

  Consider the expansion of a node $u\in A$, and
  let $v_1,\ldots,v_d$  be the $d^+(u)$ successors of~$u$.
  It follows from the definition that $f_u$ satisfies the 
  relation
\begin{equation}
  \label{eq:fu=}
  f_u = c(u) + \sum _{i=1}^{d^+(u)} e_{uv_i}f_{v_i}.
\end{equation}
Expanding node $u\in A$
adds $c(u)W[u]$ to $X^{\mathrm{current}}$,
and it adds $e_{uv_i} W[u]$ to each $W[v_i]$,
so the expression
\eqref{eq:outcome-det}
increases by $W[u](c(u)+ \sum _{i=1}^{d^+(u)} e_{uv_i} f_{v_i})$.
On the other hand, $u$ is removed from~$A$,
causing a decrease by 
$W[u]f_u$. By \eqref{eq:fu=} these two changes cancel out.
\end{proof}

\iflong
We now show that
the value
 of~\eqref{eq:outcome-det} is left invariant also by a condensation
 step.
In contrast to the expansion step
(Lemma~\ref{lem:martingale.deterministic}), this holds only in expectation:
\fi

\begin{lemma}
  \label{lem:invariant-C}
  After a condensation step, 
  the expected
value of
$\sum_{v\in A}
W[v]f_v$ is
\iflong
the same as before.
\else
unchanged.
\fi
\end{lemma}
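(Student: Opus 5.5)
The plan is to isolate the part of the sum affected by the condensation step and compute its conditional expectation directly. Fix the state just before Step~2C, with active set $A$, weights $W[v]$, and chosen subset $S\subseteq A$. Split
\[
\sum_{v\in A} W[v]f_v \;=\; \sum_{v\in A\setminus S} W[v]f_v \;+\; \sum_{v\in S} W[v]f_v .
\]
The condensation leaves every element of $A\setminus S$ and its weight untouched. So the first sum is deterministic given the current state, and it suffices to show that the second sum is preserved in expectation.

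After condensation, $S$ is replaced by a single node $u\in S$ with weight $\bar W:=\sum_{v\in S}W[v]$. The node $u$ is chosen with probability $W[u]/\bar W$. This probability is well defined because the arc weights are assumed positive in this section, so all weights are positive and $\bar W>0$. The expected contribution of the condensed element is then
\[
\sum_{u\in S}\frac{W[u]}{\bar W}\cdot \bar W f_u \;=\; \sum_{u\in S} W[u]f_u ,
\]
which is exactly the contribution of $S$ before the step. Since the factors $f_v$ from \eqref{eq:fv} depend only on $G$, the costs and the arc weights, and not on the state of the algorithm, they can be treated as constants in this computation.

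One point needs a short remark. If $u$ coincides with a node already elsewhere in $A$ (for instance a copy of the same node, when $A$ is a multiset), nothing changes in the argument. This is because the expression $\sum W[v]f_v$ is linear in the individual entries of the multiset, so we can treat $A$ as a multiset throughout.

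Finally, the choice of $S$ and the decision to condense may themselves be random or depend on the history. The statement therefore follows by conditioning on the full state and the choice of $S$, and then applying the law of total expectation. I do not expect a real obstacle here. The only care needed is to ensure that the sampling proportions exactly cancel the rescaled weight $\bar W$, which is precisely the design of Step~2C.
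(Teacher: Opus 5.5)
Your proposal is correct and follows essentially the same argument as the paper: the paper also leaves the $A\setminus S$ part unchanged and computes the expected contribution of the chosen representative as $\sum_{u\in S}\frac{W[u]}{\bar W}\cdot\bar W f_u=\sum_{u\in S}W[u]f_u$. Your added remarks on positivity of the weights, multisets, and conditioning on the history are harmless clarifications that the paper leaves implicit.
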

\begin{proof}
  Condensing
  replaces a set $S$ by one of its members $u\in S$,
where $u$ is chosen with probability $W[u]/\sum_{v\in S}W[v]$.
Let $\hat A=(A\setminus S)\cup\{u\}$ be the new active set, and let $W'$ denote the new weights.
The new weight of $u$ is $W'[u] =\sum_{v\in S}W[v]$.
The weights in $A\setminus S$ are unchanged.
Thus we get
\iflong  \else the claimed relation\fi
\begin{align}
  \label{eq:unbi1}
  E\Biggl[ \sum_{v\in \hat A} W'[v]f_v\Biggr]
  &=
    \sum_{v\in A\setminus S} W[v]f_v  +
    \sum_{u\in S}\Bigl(
    \Pr[\text{$u$ is chosen}] \times W'[u] f_u \Bigr)
  \\
  \label{eq:unbi2}
  &=
  \sum_{v\in A\setminus S} W[v]f_v  +
    \sum_{u\in S}\left(
    \frac {W[u]}
    {\sum\limits_{v\in S}W[v]} \times \sum_{v\in S}W[v]
    f_u                          \right)
  \\  \label{eq:unbi3}
  &=
  \sum_{v\in A\setminus S} W[v]f_v  +
  \sum_{u\in S} W[u]f_u  
=
    \sum_{v\in A} W[v]f_v
\iflong    ,\else.\qedhere\fi
\end{align}
\iflong as claimed. \fi
\end{proof}

\begin{theorem}
  Algorithm TS produces an estimate with
  the same expectation as Algorithm~P: 
    \begin{displaymath}
    E[X] =   c_e(G) =  \sum_{v\in V} \mathcal{P}(s_0\to v)c(v)
  \end{displaymath}
\end{theorem}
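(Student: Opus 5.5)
The plan is a martingale argument built on the two preceding lemmas. Define the potential
\[
  \Phi := X^{\mathrm{current}} + \sum_{v\in A} W[v]\,f_v ,
\]
where $f_v=\sum_{w\in V}\mathcal{P}(v\to w)c(w)$ as in~\eqref{eq:fv}. At initialization we have $X^{\mathrm{current}}=0$, $A=\{s_0\}$ and $W[s_0]=1$. Hence $\Phi_0=f_{s_0}=\sum_{w}\mathcal{P}(s_0\to w)c(w)=c_e(G)$. At termination $A=\emptyset$, so $\Phi$ equals the returned value $X$. It therefore suffices to show $E[\Phi_{\mathrm{final}}]=\Phi_0$.

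First I show that each single step preserves $\Phi$ in conditional expectation, given the entire history so far. An expansion step leaves $\Phi$ unchanged deterministically, by Lemma~\ref{lem:martingale.deterministic}. A condensation step does not touch $X^{\mathrm{current}}$, and by Lemma~\ref{lem:invariant-C} it preserves $E[\sum_{v\in A}W[v]f_v]$. This conditional statement is exactly what the proof of Lemma~\ref{lem:invariant-C} computes, since the only randomness involved there is the choice of $u\in S$.

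The algorithm's choices (which step to perform, which $u$, which $S$) may depend on the history and may even be random. I make them part of the filtration, so that they are fixed given the past before the fresh randomness of the step is drawn. Then $\Phi_0,\Phi_1,\dots$ is a martingale with respect to the history. Iterating $E[\Phi_{t+1}]=E[E[\Phi_{t+1}\mid\text{history}_t]]=E[\Phi_t]$ gives $E[\Phi_t]=\Phi_0$ for every $t$.

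The main obstacle is passing from a fixed time $t$ to the random stopping time at which $A$ becomes empty. The algorithm is unspecified enough that termination must be argued. I would first handle the case where the total number of steps is bounded by a constant $N$. The set of reachable outcomes is finite because $G$ is a finite DAG and condensations strictly shrink $A$. Then I extend $\Phi$ to be constant after termination and apply the fixed-time identity at $t=N$.

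For the bound on the number of steps, I would use the following potential: the multiset of active nodes, with each node $v$ weighted by the number of paths from $v$ to sinks plus one. Expansion replaces $u$ by its successors, so it strictly decreases $\sum_{v\in A}(\#\text{paths from }v)$, since $u$ itself counts its own paths plus one. Condensation never increases this sum. Between expansions only boundedly many condensations are possible, as each one reduces $|A|$.

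If arbitrary interleavings make such a bound awkward, I would instead assume termination almost surely with bounded number of steps. This holds for any sensible rule, since the weights are positive. The fallback is to invoke optional stopping, noting that all quantities are bounded on the finite set of histories.
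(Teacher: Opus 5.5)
Your proposal is correct and follows the paper's route: the paper also combines Lemmas~\ref{lem:martingale.deterministic} and~\ref{lem:invariant-C} into the invariant $E[X^{\mathrm{final}}]=X^{\mathrm{current}}+\sum_{v\in A}W[v]f_v$ and evaluates it at the initial state, and your explicit filtration and bounded-stopping-time argument just spell out what the paper leaves implicit. One small repair to your termination bound: weight each active node $v$ by the number $M(v)$ of all directed paths starting at $v$, including the trivial one, so that $M(u)=1+\sum_{i}M(v_i)$ and every expansion or condensation strictly decreases $\sum_{v\in A}M(v)$, giving at most $M(s_0)$ steps on every run (so no fallback is needed); the weight you first describe, ``paths to sinks plus one'', can increase by $d^+(u)-1$ under an expansion.
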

\begin{proof}
Lemmas \ref{lem:martingale.deterministic} and \ref
{lem:invariant-C}
together
imply that, at any time during the algorithm,
\begin{displaymath}
  E[X^{\mathrm{final}}]
  = X^{\mathrm{current}} + \sum_{v\in A}
W[v]f_v.
\end{displaymath}
If  we apply this to the starting situation of Algorithm
  TS, with $A
    =\{s_0
  \}$,
  $W[s_0
  ] = 1$, and $X^{\mathrm{current}}=0$,
  we get $E[X^{\mathrm{final}}]= f_{s_0}$, which equals the claimed quantity.
\end{proof}

\iflong

As a side result of this analysis, we get a combinatorial interpretation of
the expectation 
of
the random variables $F_v$ introduced
in \autoref{sec:variance} to compute the variance of the estimate~$X$.
%
Comparing
\eqref{eq:fu=}
with \eqref{eq:EFu}
and taking note of the fact that $\mu_j$ in \eqref{eq:EFu} stands
for $E[F_{v_j}]$, we see that
$f_v$
 and
$E[F_v]$
satisfy the same recurrence relation.
If $v$ is a sink,
\eqref{eq:fu=}
and \eqref{eq:EFu}
specify the same values
$
f_v
=
E[F_v]
=c(v)$, which can be used as initial conditions. 
 Thus:

\begin{proposition}  \label{lem:fr}
  The expected value of $F_v$ equals $f_v$.
  \qed
\end{proposition}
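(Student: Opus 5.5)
The plan is to show that $E[F_v]$ and $f_v$ satisfy the same recurrence with the same initial values, and then conclude by induction over the DAG in reverse topological order. Concretely, we induct on the length of the longest path starting at $v$; this is well defined because $G$ is acyclic and finite.

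\emph{Base case.} If $v$ is a sink, Algorithm~P terminates immediately after visiting $v$. Hence $F_v=c(v)$ deterministically, which is the specialization of \eqref{eq:EFu} noted in \autoref{sec:variance}. On the other side, the only path starting at $v$ is the trivial path, whose weight is $1$, so $\mathcal{P}(v\to w)$ is $1$ for $w=v$ and $0$ otherwise. Therefore $f_v=c(v)$ by \eqref{eq:fv}.

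\emph{Inductive step.} Let $v$ have successors $v_1,\ldots,v_d$. Each $v_j$ has strictly shorter longest outgoing path, so the induction hypothesis gives $E[F_{v_j}]=f_{v_j}$. Formula \eqref{eq:EFu}, obtained from the mixture lemma (Lemma~\ref{lem:mixture}), gives
\[
E[F_v]=c(v)+\sum_j e_{vv_j}E[F_{v_j}] = c(v)+\sum_j e_{vv_j}f_{v_j}.
\]
By \eqref{eq:fu=} the right-hand side equals $f_v$. The relation \eqref{eq:fu=} itself comes from splitting each nontrivial path from $v$ according to its first arc $vv_j$, and the trivial path contributes $c(v)$.

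The only point needing care is that \eqref{eq:EFu} correctly describes $F_v$ as $c(v)$ plus a mixture. This holds because the continuation from $v_j$ depends only on $v_j$ and not on the history: the factor $e_{vv_j}/\pforw_{vv_j}$ multiplies the weight, and it cancels against the probability $\pforw_{vv_j}$ in the expectation. Since this was already derived in \autoref{sec:variance}, no real obstacle remains. The proposition then follows directly.
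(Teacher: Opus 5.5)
Your proof is correct and is the paper's argument: the paper also notes that $E[F_v]$ and $f_v$ satisfy the same recurrence, \eqref{eq:EFu} versus \eqref{eq:fu=}, with the same value $c(v)$ at sinks, and concludes equality. Your induction on the length of the longest path from $v$ simply spells out the reverse-topological induction the paper leaves implicit.
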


\fi

\subsection{Generalized Condensation}
\label{sec:mix}
\iflong
The Team Sampling algorithm shown 
in \autoref{sec:TS}
proposes a specific method for
the Condensation Step (choice of probabilities and weight update).
We will see that this is just one of many possibilities.

\fi

As mentioned, Crowd Sampling was modeled after the Stochastic
Enumeration algorithm of Rubinstein~\cite{Rubinstein-2013},
which works for trees. For a tree, the algorithm becomes very
simple: In the expansion step, all backward proportions $\qback_{uv}$ are~1, and hence the
children just get the same weights as their parents.
If the resulting set $A'$ of children is too big, it is
simply replaced by a random sample of size $B$, and the weight is
accordingly multiplied by $|A'|/B$. All nodes have the same weight.

Our Crowd Sampling method mimics the behavior of this algorithm only
when
$|A'|$ is a multiple of $B$
and the partition of $A'$ into equal-size subsets $S_i$ is random.
Otherwise, Crowd Sampling generates
nodes of different weights, introducing irregularities into the process.
In fact, for DAGs,
nodes of different weights
\iflong
appear
anyway, for another reason.
\else
necessarily appear.
\fi
For a set
with weights,
the meaning 
of
``a random sample of size $B$'' 
 is not obvious;
see the survey \cite{Efraimidis2015}
for two different potential interpretations.
%
%
Let us take an
{engineering approach} and formulate
what is required to ensure that the estimate remains unbiased.
It will turn out that
this actually leaves a lot of freedom in performing the condensation step.




Assume that the set $S$ contains $k$ nodes with weights
$W_1,W_2,\ldots,W_k$.
Suppose we decide that
Step 2C should randomly select the $i$-th node with probability $p_i$,
and assign weight $W'_i$ to it in case it is selected.
If we look at the derivation \thetag{\ref{eq:unbi1}--\iflong
  \ref{eq:unbi3}\else  \ref{eq:unbi2}\fi
} in 
the proof of Lemma~\ref{lem:invariant-C}, 
we see that
what we need for
unbiasedness is the equation
\begin{equation}
  \label{eq:unbiased}
  p_1W'_1 f_1 +
  \cdots +  p_kW'_k f_k =
  W_1f_1 + \cdots + W_kf_k.
\end{equation}
Since we don't know the values $f_i$, this equation should hold for
all $f_1,\ldots,f_k$. Comparing coefficients leads to $k$ equations
\begin{equation}
  \label{eq:necessary}
    p_iW'_i = W_i,
\end{equation}
for $i=1,\ldots,k$.  
Together with the equation
$\sum_{i=1}^k p_i=1$, this gives $k+1$ equations for $2k$ unknowns
$p_i,W'_i$, which is far from enough to define a unique solution.\footnote
{We may even permit
  $\sum p_i<1$,
  effectively 
  throwing away
  the sample
with probability  $1-\sum p_i$. In the single path variation (Algorithm~P),
the algorithm gets a chance to finish early and take a
rest. This comes, of course, at the expense of a larger fluctuation of
the
estimate.}

The basic condensation step of Algorithm TS chooses $W'_i = \bar W := W_1+\cdots +W_k$ and $p_i =
W_i/\bar W$,
but it is not 
clear why this choice should be preferable over
other choices.
In fact, we can select arbitrary probabilities $p_i$ (for example,
uniform probabilities) and set $W_i' :=
W_i/p_i$. In this way, we can reinstate the forward proportions
of Algorithm~P, which Algorithm TS in its original form could not
handle.
\iflong\else\looseness -1 \fi

\iflong
The generalized condensation step for Algorithm TS is thus as follows:

\begin{description}
\item [\bf 2C$'$.] [
  Condense]
   Choose an {arbitrary} subset $S=\{u_1,u_2,\ldots,u_k\}$ of $A$ with
   $k\ge2$ 
   elements.
\\
Choose positive probabilities $p_i$ with   
$
p_1+p_2+\cdots+p_k=1$.   
  \\ 
   Pick a random node $u_i$ from $S$ according to this distribution.
   \\
   Replace $S$ in $A$ by the single element $u_i$, and set $W[u_i]:=W[u_i]/p_i$.
   \end{description}

\fi
   
\iflong
\subsection{The variance for Team Sampling strategies}
\label{sec:var-general}

For particular strategies that specialize the Team Sampling procedure, it should
in principle be possible to
extend the results of \autoref
{sec:variance}
and
compute the exact variance of the estimate~$X$
if the whole graph $G$ with all forward and backward proportions is
given.
The general idea is that each condensation step contributes to the
variance, because it replaces a set $S$ by a single sample.

Knuth
\cite[Section 7.2.2.9, 
Theorem~V]{kn4c}
has given a method to compute the variance
for Chen's version 
\cite{Chen-sampling-92} of Bucket Sampling, where the heuristic
function partitions the nodes into strata in an acyclic way.
It would be interesting to get a similar analysis for Crowd Sampling.

\fi

\iflong

\subsection{Relation to dynamic programming}
\label{relation-DP}

Many counting algorithms
are based on \emph{dynamic programming}, or
the \emph{transfer-matrix method}
as it is sometimes called
depending on the implementation.
We show that these algorithms can be seen as a special case of Bucket Sampling.

We illustrate this
with the problem of counting {directed lattice animals},
following Conway~%
\cite
{Conway-2017}.
A \emph{directed animal}
(see \autoref{fig:a-directed-animal})
is a polyomino in the first quadrant such
that every occupied cell, except the cell at the bottom-left
corner, 
has an occupied cell to
its left or at its bottom (or both).
Directed animals are
theoretically well-understood \cite{directed-animals-1998},
and they are chosen here 
for
illustration only.

\begin{figure}[htb]
  \centering
  \includegraphics[scale=1]{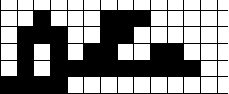}
  \caption{A directed animal with 29 cells}
  \label{fig:a-directed-animal}
\end{figure}

\autoref{fig:directed-animals}a shows an \emph{enumeration tree} $T$
for the directed
animals. 
The animals have been turned by $135^\circ$ and grow from top to bottom.
The black cells are the cells that are already part
of the animal, 
and the shaded cells are the adjacent cells below the chosen cells
whose inclusion has not been decided yet.
At each tree node, we take the
leftmost undecided cell in the highest row (shaded in blue)
and decide whether it is excluded (marked by a cross) or
included (turned black). If there is only a single shaded cell, it has
to be included.
In this way, we try out all possibilities of generating a directed
animal.
We stop when we have 4 black cells, and hence
the 13 leaves of this tree represent the directed animals of size~4.

\begin{figure}[htb]
  \centering
  \includegraphics[scale=0.97]{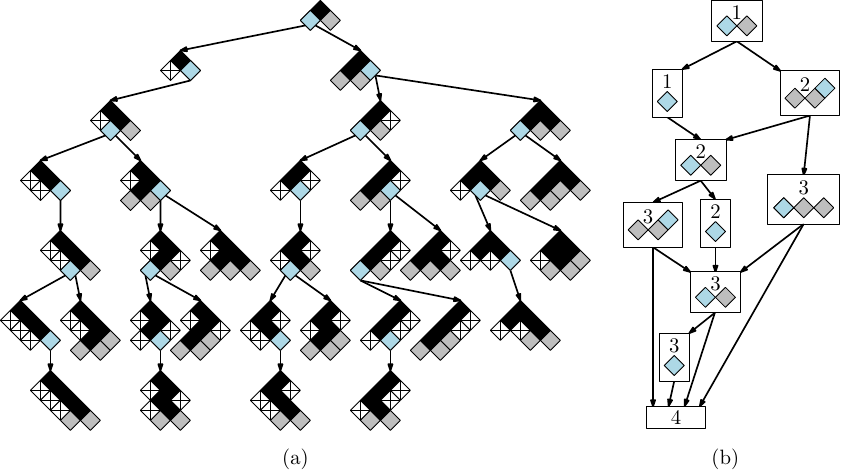}
  \caption{(a) an enumeration tree $T$ for directed animals  and (b) the associated dynamic-programming
    DAG $G$, adapted from Conway \cite[Fig.~1 and~2]{Conway-2017}}
  \label{fig:directed-animals}
\end{figure}

The main idea of dynamic programming is that we don't need to remember
the whole animal 
as it is built:
For determining how to continue,
it suffices to know the \emph{pattern} of
the shaded (grey and blue) cells and the \emph{number} of black cells.
Nodes for which this information agrees have equivalent subtrees.
By merging such nodes,
we arrive at
the
DAG $G$ shown in
\autoref{fig:directed-animals}b.
The directed animals correspond now to the source-
sink paths in
~$G$.


The DAG~$G$ is smaller than the tree $T$ from which it originated.
(For larger examples than the one shown in the picture, the reduction
in size
is more impressive.)
However, 
for the single-path algorithm (Algorithm P), it makes
\emph{absolutely no difference
} whether it moves in the tree $T$
to count leaves according to Theorem~\ref{th:unbiased}
or in the DAG $G$ with update
rule (U$_1$) to count source-sink paths
according to Theorem~\ref{th:unbiased-unweighted-paths}.
(In this spirit,
Conway's reference to the DAG in \cite[Fig.~2]{Conway-2017} as a ``tree''
 is perhaps not
completely off-track.) 


 We can use Bucket Sampling with a heuristic function $h$ that maps the nodes of
 the tree~$T$ to the corresponding nodes in the DAG $G$.
It turns out that,  in this way, we arrive
essentially 
at the
classic
dynamic-programming algorithm for counting objects by the DAG:
We scan the DAG top-down, adding the weight of each node to all its
children.
Since only nodes with equal 
subtrees are thrown together in one bucket,
the algorithm gives the exact 
result.

But the possibilities of Bucket Sampling 
go beyond this:
We are allowed to merge nodes
whose subtrees are different (but hopefully at least similar).
The transition from the tree to the DAG,
which is crucial for the speed-up in dynamic programming,
is, however, not essential for Bucket
Sampling, because it is always implicitly included.
Perhaps,
 advantages of
working with a 
DAG instead of an enumeration tree will
show up 
in connection with other Team Sampling
strategies.

\fi
\section{Extensions and potential applications}
\label{sec:extensions}

\iflong
\subsection{Extension to arc costs}
\label{sec:ext}

For a graph where a cost $c(uv)$ is associated to each arc
$uv$,
the extension to estimating the total {sum} of all arc costs is straightforward:
When proceeding from $u$ to $v$, the algorithm adds
 $W\cdot c(uv)/\pforw_{uv}$
to $X$.
Conceptually, we could simply subdivide every arc and let the newly inserted nodes
take over the role of the arcs.

\fi

\subsection{Some applications}
\label{sec:applications}

We list a few applications where it might be promising to try
some of our estimation algorithms.
 The number of vertices of a polytope, to be discussed in the
first subsection, appears very
natural; most of the remaining applications are more speculative.

\subsubsection
{Vertices of a polytope}
For a polytope, 
a linear
objective function 
induces an orientation of the graph of the
polytope\iflong:
It directs each edge in the direction of improvement of
the objective function, 
\else
, \fi
provided that the
\iflong
objective
\fi
function is
\iflong
sufficiently generic
\else
 generic enough
\fi
and
\iflong
does not assign the same value
\else
 assigns different values
\fi
to adjacent vertices.

As mentioned in \autoref{sec:start},
Avis and Devroye have used the tree version of Algorithm~P to estimate
the number of vertices of a polytope~\cite{Avis-Devroye-2000}.
The simplex algorithm generates a path from each vertex of a polytope
to the optimum vertex, and thereby defines a spanning tree of the
vertices.
This spanning tree is at the core of the
\emph{reverse search} method for enumerating all vertices, and
it is the tree to which
Avis and Devroye apply Algorithm~\hbox{P}.
Salomone, Vaisman, and Kroese~\cite{salomone16} have applied the
Stochastic Enumeration method (which we have called Crowd Sampling) to this tree.

However, the graph of a polytope whose edges are oriented by an
objective function is
naturally a DAG.\footnote
{Every Klee--Minty cube is of course a polytope, but also the
  permutation DAG is obtainable as the graph of a polytope,
  namely the \emph{permutahedron}:
The vertex coordinates of the permutahedron
are given by the inverse permutations of the permutations that form
the nodes of the permutation DAG as described here.
  For any generic objective function, the directed graph of the
  permutahedron edges is isomorphic to the  permutation~DAG.}
 We intend to compare our approach
against the tree-based estimation methods of \cite{Avis-Devroye-2000}
and~\cite{salomone16}.\iflong
\par
{
  When pivoting from vertex to vertex by the simplex algorithm,
  one has to take into account an unpleasant technicality:
The objects that the simplex algorithm visits
are 
the \emph{feasible bases}, and
for polytopes that are not simple, a vertex can correspond to several
bases.
The standard way to treat such \emph{degenerate} or nonsimple polytopes,
where more than $n$ facets meet in a vertex, is to
perturb the constraints, resulting in a simple polytope, and
count only one of the many vertices that correspond to a vertex of the
original polytope. See
\cite{Avis-Devroye-2000} for details.}
\fi
\subsubsection{Triangulations of a planar point set}
\label{sec:triangulations}
Alvarez and Seidel
\cite{alvarez-seidel-13} have given an algorithm with runtime
$O(2^nn^2)$
for counting the
triangulations of an $n$-point set in the plane.
They model triangulations
as
source-sink paths in a DAG with $n2^{n-3}$ nodes.
Our algorithms can therefore be readily applied. 
The quality of the resulting approximation remains to be explored.
The state of the art about approximate counting of triangulations is
rather weak: There is only
a
subexponential-time
approximate counting method
with a subexponential multiplicative error~\cite{alvarez2015-counting}.
\iflong
The current fastest algorithm for exact counting takes time $n^{O(\sqrt
  n)}$~\cite{Till-cactus-2016}.
\fi

\subsubsection
{Linear extensions of a two-dimensional partial order}
\label{sec:linear-extensions}
If we start from an arbitrary permutation $\pi=(\pi_1,\ldots,\pi_n)$
in the permutation DAG, the 
reachable nodes are
the linear
extensions of a two-dimensional partial order $\prec$, which is
defined by the rule
$i\prec j :\Leftrightarrow i<j \land \pi_i<\pi_j$.
Thus our path sampling methods can be used to estimate
the number of linear extensions of such a partial order.
The computation of indegrees must be adjusted to count only those
predecessor permutations that can be reached from $\pi$.
\iflong\else\looseness -1 \fi


Counting the linear extensions is
\#P-complete, even for a two-dimensional partial order, as proved by
Dittmer and Pak
\cite[Theorem~1.3]{dittmer-pak2018}.\footnote{
\iflong  There is a \else A \fi
  subsequent
  journal 
  article by
  the same authors with the same title\iflong, which, however,
omits 
  this result
 and does not even mention it.
 \else\
 does not mention  this result.
 \fi
}
In any case, there already exists a fully polynomial randomized approximation
scheme
for the number of linear extensions of an arbitrary poset,
because the number of linear extensions is directly related to the volume of
the order polytope, and the polytope volume can be approximated by
Markov chain Monte Carlo methods,
see for example
\cite[Section~3]{brightwell91} for details and references.
Our path sampling methods might lead to alternative approximation
schemes that are simpler.


\subsubsection{Pattern-avoiding permutations}
\label{sec:pap}
A permutation $\tau$ \emph{contains} a permutation $\sigma=(\sigma_1,\ldots,\sigma_k)$,
the \emph{pattern}, if $\tau$ has a subsequence of length $k$ whose
elements are in the same relative order as the elements of~$\sigma$;
otherwise, $\tau$ \emph{avoids}~$\sigma$.
For example, the permutation 14253 from \autoref{fig:perm-path}
contains the pattern 1324, in the form of the subsequence 1425,
whereas its predecessor 14523 on the path avoids~1324.
By the Marcus--Tardos theorem, the number of permutations of
length~$n$ avoiding a fixed pattern grows only exponentially in~$n$
\cite{marcus-tardos-04}.
For the patterns of length~3, these numbers are
the Catalan numbers, and for length~4, there are three
essentially different possibilities, so-called
Wilf-equivalence classes, represented by
the patterns
1234, 1342, and 1324
\cite{vatter-15-permutation-classes}.
The 1324-avoiding permutations are the only ones among them
for which no formula is known,
and
they have become a notorious benchmark problem
for counting algorithms.
Their numbers are known exactly up to $n=50$
\cite{conway-guttmann-zinn-justin-18},
and the exponential growth rate is estimated as $11.600\pm 0.003$
by series analysis of these numbers,
while the best proven bounds are $10.27$ and $13.5$
\cite{bevan-brignall-elvey-price-pantone-20}.
\iflong
Markov chain Monte Carlo methods have been used to sample random
1324-avoiding permutations~\cite{madras-liu-10}.
\fi

The class of $\sigma$-avoiding permutations
is closed under taking
subsequences in the following sense:
If we delete an element from a $\sigma$-avoiding permutation and
renumber the remaining elements consecutively, the result
still avoids~$\sigma$.
Hence, the $\sigma$-avoiding permutations of length at most $n$ form
a DAG with the singleton permutation $1$ as the source, in which the
successors of a permutation $\pi$ of length $m<n$ are the
$\sigma$-avoiding permutations of length $m+1$
obtained by inserting a half-integer element at an arbitrary position and renumbering.
Every node can be reached from the source.
Moreover, assuming $k\ge2$,
one can always append either a new largest or smallest element at
the end without creating an occurrence of $\sigma$.
Therefore,
every source-sink path ends at length~$n$.
With costs $c(v)=1$ for the permutations of length~$n$ and $c(v)=0$
otherwise, Algorithm~P estimates their number.
The indegree is easy to compute and requires no test for
pattern containment:
Deleting either of two adjacent elements whose values differ by~1
gives the same result,
and one can show that these are the only coincidences.
Thus, for a permutation $\tau$ of length~$m$, 
\[
  d^-(\tau) = m - \bigl|\{\,i : |\tau_i-\tau_{i+1}|=1\,\}\bigr|.
\]
For the outdegree, one must try the $(m+1)^2$ ways of inserting a
new element into $\tau$, discard duplicates (by the same rule), and
check whether the new element creates an occurrence of~$\sigma$.

If one only allows the insertion of a new \emph{largest} element, one
gets instead a tree, the so-called \emph{generating tree}
\cite{west-95-generating-trees},
to which the original tree version of Algorithm~P is applicable.
It would be interesting to see which of the two settings (tree or DAG) leads to
better estimates.
The exact counting algorithms for 1324-avoiding permutations are, in essence,
dynamic-programming algorithms that merge nodes of the generating tree
whose subtrees are equivalent~\cite{marinov-radoicic-03,Conway-2017}.
As discussed in \autoref{relation-DP} for directed animals,
Bucket Sampling offers a way to relax this:
nodes whose subtrees are only ``similar'' can be put into the
same bucket.
Since the exact counts up to $n=50$ are available for comparison,
the 1324-avoiding permutations are an ideal testing ground
for our methods,
which could then be applied to
get estimates for larger $n$,
or to estimate the counts for other patterns or for sets of several
forbidden patterns, for which no specialized algorithms have been designed.

\subsubsection{Polyominoes}
\label{sec:polyominoes}

Polyominoes are another classic counting problem
of this kind;
the directed animals of \autoref{relation-DP} are
a special type of polyominoes.
As for the 1324-avoiding permutations,
the record exact counts for various types of polyominoes have been
obtained by dynamic programming,
known in this area as the \emph{transfer-matrix} method,
which avoids enumerating the polyominoes individually,
see
\cite{Polyominoes-to-70-ALENEX,Polyominoes-to-70-Algorithmica} for
recent advances, and
\cite{Conway-2017} for an overview that covers both problems.
As discussed in \autoref{relation-DP},
such algorithms can be interpreted as
accumulating (possibly weighted) source-sink paths in a DAG.
They are run on huge DAGs, exhausting the
limits of computer time and memory.
With our approach, the same ideas can be leveraged to get estimates far beyond
the polyomino sizes for which exact computation is feasible.
This might be an alternative to
Markov-chain based sampling techniques~\cite{PhysRevLett.41.1333,Hsu_2005}.
With appropriate
weights $c(v)$, one can also gather statistics about various
properties of polyominoes.

\subsubsection{Maximal sets of an independence system}
\label{sec:maximal-IS}

Honcharov, Sar{\i}y\"uce, Laishram, and
Soundarajan~\cite{honcharov-etal-2023}
have independently discovered the generalization of Algorithm~P
from trees to DAGs
for a special graph problem: They used it to estimate the
number of \emph{skeletal cores} of a graph,
 which
are minimal subgraphs that preserve certain properties.
More precisely, they have the property that
every vertex
keeps its original \emph{core number}, where the core number of
a vertex $v$ is the largest $k$ such that $v$ is contained in a
subgraph of minimum degree~$k$.

They apply Algorithm~P 
to a DAG whose nodes are subgraphs and whose
arcs correspond to edge deletions.
The sets of edges that can be deleted
form an \emph{independence system}: a family of sets that is
closed under taking subsets.
A skeletal core corresponds to
a \emph{maximal} set in this family (a \emph{basis} of the
independence system).

In this setting, the DAG is layered,
and a node at distance~$j$ from the source has indegree~$j$.
This regular structure makes the algorithm particularly easy.
Instead of dividing by the indegree at every step,
Honcharov et al.\  divide the final weight by $j!$,
where $j$ is the number of deleted edges, to account for the $j!$ different
paths that reach the same sink.
This gives exactly the same result as
Algorithm~P with the standard proportions and $c(v)=1$ for the sinks.

\subsection{Graphs with cycles}
\label{sec:cycles}

Marchetti-Spaccamela
\cite{marchettispaccamela-89}
 has proposed an interesting extension of
Algorithm~P to arbitrary digraphs.  From each node that is visited,
this method repeatedly starts a random path in the reverse direction of the
arcs and follows it until it arrives at a node that already
lies on that path or at the source. The number of trials necessary
to reach the latter outcome
is used to adjust the estimate.  Since the probability of this event
can be very low, the method requires many samples, and it is practical
only in limited settings.

\section{Conclusion}
\label{sec:conclusion}

We have proposed Team Sampling, a general scheme for obtaining
unbiased estimates of the size of a DAG with a team of cooperating
agents, and we have tried two specializations, Bucket Sampling and
Crowd Sampling, which carry over methods of
Chen~\cite{Chen-sampling-92} and Rubinstein~\cite{Rubinstein-2013}
from trees to DAGs.
There is
ample opportunity to come up with more ideas and try them out; we have
barely scratched the surface with our experiments.
For example, one can
combine the ideas of Bucket Sampling and Crowd Sampling, allowing
up to $B$ nodes to occupy one bucket.

\subsection{Choice of forward and backward proportions}
\label{sec:choice}
For the choice of forward proportions $\pforw_{uv}$,
Importance Sampling
gives some guidelines about what one wants to achieve
(see \autoref{sec:IS}).

If the 
backward proportions $\qback_{uv}$ are nonnegative, they
could of course be interpreted as transition probabilities for
a random walk in the reverse graph.
Nevertheless, their role in the algorithm
is not
understood at all.
\iflong
One may even allow them to become negative or zero:
$\qback_{uv}=0$ effectively means that the sample is ignored when the
algorithm traverses the arc $uv$.
Negative proportions
$\qback_{uv}<0$ are 
mysterious.

\fi

\subsection{Provable approximation quality?}
Vaisman and Kroese
showed that
 Stochastic Enumeration (i.e., Crowd Sampling)
is a fully polynomial randomized approximation scheme (FPRAS)
for counting the number of nodes of trees,
for \emph{most} trees in a certain (rather restricted) class of random
trees
\cite[Theorem 4.3\iflong\footnote
{In the statement of this theorem, the budget parameter $B$ for the algorithm is not
  specified. It must be set to $kB'$, where $B'$ is
 defined in
  \cite[Theorem 4.1, p.~52]{Vaisman-Kroese-2017-trees},
  depending polynomially on the tree height and the maximum degree~$k$, as well
  as the
  parameters of the random tree model (confirmed by Radislav Vaisman, personal
  communication, July 2025).}\fi
]{Vaisman-Kroese-2017-trees}.
It would be interesting to see whether Path Sampling or Team Sampling
 leads to an FPRAS for some combinatorial counting problems.
Some candidate problems were listed in
\autoref{sec:applications}.


\iflong
\paragraph{Acknowledgment.}

The ideas of this paper were inspired by discussions initiated at
the workshop on \emph{Computational Polyhedral
Geometry},
which took place
on June 26, 2025
as part of the
Computational Geometry Week (CG Week) in Kanazawa, Japan.\footnote{\url{https://socg25.github.io/workshops.html}}
I thank  Laura Casabella and Dante Luber
for organizing 
this workshop, and the other
participants for the lively discussions.

I thank Bruno Scherrer for making me aware that the
idea of Algorithm~P, which I developed independently,
had been anticipated a long time ago by Leonard Pitt \cite{Pitt1987}.

\fi

\paragraph{AI usage statement.}
The AI assistant Claude from Anthropic has been used
to check
the manuscript for consistency
and correctness, and to suggest improved formulations in many places.
The AI assistant pointed the author to 
Kitahara and Mizuno~\cite {KITAHARA-Klee-Minty-2011},
where an earlier version of a Klee--Minty cube with integral
vertices was used.
 Section~\ref{sec:pap}
 has been generated by the AI assistant from a stub after
 the author had forgotten what he intended to write there.
The programs were initially written and the test runs were carried out, alas,
without any AI assistance.
The AI assistant found a programming error, by which
\autoref{tab:KM-IS1} in \autoref{sec:importance} reported results of the same experiment as
\autoref{tab:KM-IS},
 and
two inconsistencies between the programs and the descriptions in the text;
all these mistakes have been corrected.
The author assumes the 
responsibility for the results.

\bibliographystyle{plainurl}
\bibliography{estimating-DAG}

\appendix

\iflong\else

\fi

\section{Importance Sampling for Klee--Minty cubes}
\label{sec:importance}

As mentioned in \autoref{sec:IS}, we have tried Importance Sampling by
outdegree also for Klee--Minty cubes.
In comparison to
\autoref{tab:P-KM},
the results of
\autoref{tab:KM-IS} show a substantial improvement in the
theoretical standard deviation, by several orders of magnitude.
Like for permutations,
the comparison in terms of
empirical behavior is not at all clear. Even with 10 million samples, the
fluctuations are too high to make a reasonable assessment.

\begin{table}[htb]
  \centering
  \par\noindent\begin{tabular}{|c|r@{}lll|r@{}lll|}
\hline
\multicolumn{9}{|c|}{Klee--Minty cubes}\\
\hline
&\multicolumn4{c|}{$n=10$}&\multicolumn4{c|}{$n=12$}\\
$m$ & mean &\,$\pm$\,std.\,dev.&\multicolumn 1 c{max.}&\multicolumn 1 {c|}{$\hat\chi$}& mean &\,$\pm$\,std.\,dev.&\multicolumn 1 c{max.}&\multicolumn 1 {c|}{$\hat\chi$}\\
\hline\vbox to 11pt{}%
$10^1$&$65.0$&$\,\pm\,46.7$& $1.45{\times}10^{2}$ & $0.22366$&$503.5$&$\,\pm\,668.7$& $1.99{\times}10^{3}$ & $0.39488$  \\
$10^2$&$500.7$&$\,\pm\,1{,}279.9$& $7.67{\times}10^{3}$ & $0.15319$&$685.5$&$\,\pm\,1{,}676.9$& $1.05{\times}10^{4}$ & $0.15256$  \\
$10^3$&$649.7$&$\,\pm\,3{,}306.4$& $5.54{\times}10^{4}$ & $0.08534$&$1{,}854.5$&$\,\pm\,15{,}942.5$& $3.76{\times}10^{5}$ & $0.20282$  \\
$10^4$&$880.2$&$\,\pm\,11{,}958.6$& $8.46{\times}10^{5}$ & $0.09610$&$2{,}418.5$&$\,\pm\,33{,}458.9$& $2.61{\times}10^{6}$ & $0.10777$  \\
$10^5$&$910.0$&$\,\pm\,37{,}608.7$& $1.16{\times}10^{7}$ & $0.12699$&$6{,}892.9$&$\,\pm\,1.23{\times}10^{6}$& $3.86{\times}10^{8}$ & $0.55991$  \\
$10^6$&$983.9$&$\,\pm\,87{,}220.8$& $8.33{\times}10^{7}$ & $0.08466$&$3{,}969.2$&$\,\pm\,5.15{\times}10^{5}$& $3.86{\times}10^{8}$ & $0.09723$  \\
$10^7$&$1{,}006.4$&$\,\pm\,1.45{\times}10^{5}$& $4.38{\times}10^{8}$ & $0.04347$&$3{,}737.7$&$\,\pm\,4.70{\times}10^{5}$& $7.03{\times}10^{8}$ & $0.01881$  \\
\hline
\vbox to 11pt{}&$1{,}024\phantom{.0}$&$\,\pm\,\rlap{$2.141{\times}10^{6}$}$&&&$4{,}096\phantom{.0}$&$\,\pm\,\rlap{$7.762{\times}10^{10}$}$&&\\
&\multicolumn4{l|}{\#visited: avg\ = 22.4, max\ = 86}&\multicolumn4{l|}{\#visited: avg\ = 29.2, max\ = 104}\\
\hline
\end{tabular}

  \caption{Algorithm P for Klee--Minty cubes with Importance Sampling
    by outdegree}
  \label{tab:KM-IS}
\end{table}

\subsection{Importance Sampling by the size of the reachable set}
\label{sec:reachable-set}

As mentioned,
when Path Sampling is applied to a rooted tree,
the ideal forward
proportions 
are known:
The probabilities $\pforw_{uv}$ should be proportional to the cost of
the subtrees rooted at the children $v$ of node~$u$.
For DAGs, it is not clear whether such a perfect choice of
forward proportions that completely eliminates variations of $X$ exists. As a substitute for the size of
the subtree rooted at $v$, we may take the size $R_v$ of the node set
reachable from $v$.

For Klee--Minty cubes, we are in the lucky situation that
the size $R_v$ of the reachable set is readily available:
The Klee--Minty cube of dimension $n$ has a directed Hamilton path, in which
the last coordinate $x_n$ increases in an arithmetic progression
$0,1,2,\ldots,2^n-1$, see
\autoref{fig:klee-minty}. Thus, $R_v=2^n-x_n(v)$.

 We made two experiments where
 $R_v$ is taken into account for computing the forward proportions.
We emphasize 
that this
is not suitable in practical applications:
If the size of the reachable set were known, 
there would be no
need to compute an estimate by running a sampling algorithm. 
However, the experiment teaches us what we might be looking for in
the choice of forward proportions.

In the first experiment (\autoref{tab:KM-IS1}), we simply
chose the probabilities $\pforw_{uv}$
for the successors $v$ of a node~$u$ proportional to~$R_v$.
We see no improvement over simply using the outdegree.
For $n=12$, both the observed and the theoretical standard deviation
are even about one order of magnitude larger.

In the second experiment (\autoref{tab:KM-IS2}), we took into
account the fact that a node $v$ with large indegree is visited from
several predecessors. We therefore chose the probabilities
proportional to
$R_v/d^-(v)$.
Here we get
much better estimates, and we see a
substantial reduction of the standard deviation. While these weights
are not perfect, they indicate the direction in which one should
look.
Conceivably, a perfect choice of parameters can be obtained by
adjusting also the backward proportions $\qback_{uv}$.

\begin{table}[htb]
  \centering
  \par\noindent\begin{tabular}{|c|r@{}lll|r@{}lll|}
\hline
\multicolumn{9}{|c|}{Klee--Minty cubes}\\
\hline
&\multicolumn4{c|}{$n=10$}&\multicolumn4{c|}{$n=12$}\\
$m$ & mean &\,$\pm$\,std.\,dev.&\multicolumn 1 c{max.}&\multicolumn 1 {c|}{$\hat\chi$}& mean &\,$\pm$\,std.\,dev.&\multicolumn 1 c{max.}&\multicolumn 1 {c|}{$\hat\chi$}\\
\hline\vbox to 11pt{}%
$10^1$&$218.6$&$\,\pm\,226.5$& $7.47{\times}10^{2}$ & $0.34168$&$472.9$&$\,\pm\,661.9$& $2.03{\times}10^{3}$ & $0.43003$  \\
$10^2$&$640.6$&$\,\pm\,1{,}830.8$& $1.30{\times}10^{4}$ & $0.20285$&$2{,}626.8$&$\,\pm\,11{,}882.5$& $8.69{\times}10^{4}$ & $0.33074$  \\
$10^3$&$756.1$&$\,\pm\,5{,}094.5$& $1.33{\times}10^{5}$ & $0.17623$&$2{,}167.0$&$\,\pm\,27{,}495.6$& $8.44{\times}10^{5}$ & $0.38950$  \\
$10^4$&$818.4$&$\,\pm\,9{,}017.6$& $5.12{\times}10^{5}$ & $0.06258$&$2{,}403.3$&$\,\pm\,36{,}794.5$& $2.26{\times}10^{6}$ & $0.09389$  \\
$10^5$&$891.8$&$\,\pm\,20{,}241.5$& $5.13{\times}10^{6}$ & $0.05748$&$4{,}349.1$&$\,\pm\,5.99{\times}10^{5}$& $1.88{\times}10^{8}$ & $0.43315$  \\
$10^6$&$984.7$&$\,\pm\,31{,}787.0$& $1.26{\times}10^{7}$ & $0.01279$&$6{,}346.5$&$\,\pm\,2.53{\times}10^{6}$& $2.35{\times}10^{9}$ & $0.37095$  \\
$10^7$&$982.6$&$\,\pm\,50{,}113.4$& $6.09{\times}10^{7}$ & $0.00620$&$4{,}877.9$&$\,\pm\,3.17{\times}10^{6}$& $9.30{\times}10^{9}$ & $0.19071$  \\
\hline
\vbox to 11pt{}&$1{,}024\phantom{.0}$&$\,\pm\,\rlap{$2.080{\times}10^{6}$}$&&&$4{,}096\phantom{.0}$&$\,\pm\,\rlap{$6.025{\times}10^{11}$}$&&\\
&\multicolumn4{l|}{\#visited: avg\ = 37.8, max\ = 108}&\multicolumn4{l|}{\#visited: avg\ = 53.0, max\ = 148}\\
\hline
\end{tabular}

  \caption{Algorithm P for Klee--Minty cubes with Importance Sampling
    by the size of the reachable set}
  \label{tab:KM-IS1}
\end{table}

\begin{table}[htb]
  \centering
  \par\noindent\begin{tabular}{|c|r@{}lll|r@{}lll|}
\hline
\multicolumn{9}{|c|}{Klee--Minty cubes}\\
\hline
&\multicolumn4{c|}{$n=10$}&\multicolumn4{c|}{$n=12$}\\
$m$ & mean &\,$\pm$\,std.\,dev.&\multicolumn 1 c{max.}&\multicolumn 1 {c|}{$\hat\chi$}& mean &\,$\pm$\,std.\,dev.&\multicolumn 1 c{max.}&\multicolumn 1 {c|}{$\hat\chi$}\\
\hline\vbox to 11pt{}%
$10^1$&$607.9$&$\,\pm\,396.9$& $1.28{\times}10^{3}$ & $0.21010$&$868.0$&$\,\pm\,1{,}456.5$& $4.92{\times}10^{3}$ & $0.56699$  \\
$10^2$&$857.0$&$\,\pm\,1{,}552.7$& $1.27{\times}10^{4}$ & $0.14858$&$1{,}704.5$&$\,\pm\,4{,}347.4$& $4.04{\times}10^{4}$ & $0.23701$  \\
$10^3$&$985.6$&$\,\pm\,3{,}287.0$& $5.65{\times}10^{4}$ & $0.05729$&$3{,}150.8$&$\,\pm\,13{,}846.9$& $2.47{\times}10^{5}$ & $0.07848$  \\
$10^4$&$971.5$&$\,\pm\,3{,}667.1$& $1.16{\times}10^{5}$ & $0.01198$&$3{,}677.1$&$\,\pm\,37{,}719.8$& $3.33{\times}10^{6}$ & $0.09052$  \\
$10^5$&$1{,}031.9$&$\,\pm\,8{,}148.2$& $1.93{\times}10^{6}$ & $0.01874$&$3{,}717.0$&$\,\pm\,52{,}576.6$& $1.26{\times}10^{7}$ & $0.03399$  \\
$10^6$&$1{,}029.4$&$\,\pm\,11{,}760.9$& $9.25{\times}10^{6}$ & $0.00899$&$4{,}268.1$&$\,\pm\,2.40{\times}10^{5}$& $2.18{\times}10^{8}$ & $0.05108$  \\
$10^7$&$1{,}025.8$&$\,\pm\,9{,}644.3$& $1.21{\times}10^{7}$ & $0.00117$&$4{,}053.0$&$\,\pm\,1.22{\times}10^{5}$& $2.18{\times}10^{8}$ & $0.00538$  \\
\hline
\vbox to 11pt{}&$1{,}024\phantom{.0}$&$\,\pm\,\rlap{$1.071{\times}10^{4}$}$&&&$4{,}096\phantom{.0}$&$\,\pm\,\rlap{$9.419{\times}10^{5}$}$&&\\
&\multicolumn4{l|}{\#visited: avg\ = 42.5, max\ = 112}&\multicolumn4{l|}{\#visited: avg\ = 58.7, max\ = 152}\\
\hline
\end{tabular}

  \caption{Algorithm P for Klee--Minty cubes with Importance Sampling
    by the size of the reachable set divided by the indegree}
  \label{tab:KM-IS2}
\end{table}

\iflong
\section{Details about implementing Klee--Minty cubes}
\label{sec:KM-details}

The natural way to associate a bit vector $(y_1,\ldots,y_n)$ to the
vertices of a Klee--Minty cube
\eqref{eq:KM} (\autoref{sec:klee-minty})
is derived from the standard cube:
$y_i=1$ if $x_i$ is at the ceiling, i.\,e.,
$x_i=2^i-1-x_{i-1}$,
and $y_i=0$ if $x_i$ is at the floor, i.\,e.,
$x_i=x_{i-1}$.
In this way, the Klee--Minty cube is mapped to the standard cube, and
the $n$ neighbors of
$(y_1,\ldots,y_n)$ are obtained by simply flipping a single bit.
The downside of this representation is that it is not
easy to see whether an edge is improving.

Our program uses
a different representation $(z_1,\ldots,z_n)$, which was introduced by
Gärtner, Henk, and Ziegler
\cite{Randomized-simplex-Klee-Minty-1998}:
%
%
\begin{align*}
  z_n &= y_n \\
  z_{n-1} &= y_n\oplus y_{n-1} \\
  z_{n-2} &= y_n\oplus y_{n-1}\oplus y_{n-2} \\
  &\ \vdots\\
  z_{1} &= y_n\oplus y_{n-1}\oplus y_{n-2} \oplus
                  \cdots \oplus y_2 \oplus y_1
\end{align*}
It is the sequence of partial sums of the $y_i$'s using addition
modulo~2, or the
exclusive-or operation~$\oplus$.

The outgoing arcs are identified by
the zero bits in this representation.
If we follow such an arc, we must flip the $i$-th bit together with
all lower-order bits.
The reason is that while
the orientation of all incident edges with higher index is maintained,
 the lower-index directions are perverted.

More concretely, we represent a vertex by the $n$-bit 
integer
$(z_nz_{n-1}\ldots z_2z_1)_2$. It turns out that the value of this
integer is equal to the $x_n$-coordinate of the vertex.

\fi

\iflong
\section{Purdom's partial backtracking method}
\label{sec:purdom}
 Purdom
\cite{purdom-78}, in
1978, suggested
the following
variation of Knuth's Path Sampling
 algorithm for trees: Instead of proceeding to a single child, proceed
 to at most $m$ randomly selected 
 children.
 The parameter $m$ can itself be chosen at random.
 In particular, Purdom looked at the setting of binary trees,
 and he proposed to choose $m=2$ or $m=1$ with probability $p$ and
 $1-p$ respectively.
\looseness-1

 Purdom 
 analyzed the number of visited nodes as well as the variance of
 the estimated tree size,
 for a certain class of random binary trees.
He also made experiments with
a backtracking program.
\looseness-1

This method could be easily adapted to DAGs, but it will probably not
compete with the more sophisticated Team Sampling methods, which allow
a more direct control of the number of visited nodes.
%
Still, one sentence from the conclusion of the paper is worth citing:
``The analysis suggests that good results can be obtained by choosing
the number of branches to investigate so that each level, from the
root down to the bulk of the nodes, is examined at about the same
frequency.''
This is in line with the strategy of the Crowd Sampling scheme,
which sets a fixed budget $B$. In case of a tree,  Crowd Sampling 
would
indeed look at up to $B$ nodes in each level.

\fi

\iftrue


\section{Index of important notation}
\label{sec:notation}
\begingroup \parindent 0pt
\noindent\null\indent
$A,A',\hat A$ active set (list, or multiset) of nodes in Team Sampling
(\autoref{sec:TS}) and its variations


$B$ budget: bound on the size of the active set in Crowd Sampling
(\autoref{sec:crowd})

$c(v),c(uv), c(G)$ node costs,  arc costs, total cost of the graph
(\autoref{sec:start}, \autoref{sec:ext})

$c_e(G)$ total cost with path weights, the expectation of $X$
(\autoref{sec:arbitrary-arc}, \autoref{th:unbiased-paths})

$\hat \chi$  Chatterjee--Diaconis score
(\autoref{sec:permutations})

$d^+(v),d^-(v)$ outdegree, indegree
(\autoref{sec:start})

$D_u(W)$ random 
value of all future additions to $X$
when $u$ is entered with weight $W$
(\autoref{sec:variance})

$e_{uv}, e(P)$ arc weights, path weights
(\autoref{sec:arbitrary-arc});
assumed positive in \autoref{sec:team}


$E[\ ]$ expectation

$F_u$ random factor with $D_u(W)=F_uW$
(\autoref{sec:variance})

$f_v =\sum_{w} \mathcal P(v\to w)c(w)=E[F_v]$, coefficients in the
invariant for Team Sampling
(\autoref{sec:TS-correctness})

$G=(V,E)$ the input DAG with node set $V$ and arc set $E$
(\autoref{sec:start})

$h\colon V\to U$  heuristic function, which assigns nodes to buckets
(\autoref{sec:bucket})



$m$ number of repetitions in the experiments
(Section~\ref{sec:permutations})

$\mu_j$ mean of $F_{v_j}$ for a successor node $v_j$
(\autoref{sec:variance})

$n$ 
dimension of the Klee--Minty cube
(\autoref{sec:klee-minty});
number of permuted elements 
(Sections~\ref{sec:permutations},~\ref{sec:pap})



$\mathcal P(v\to w)$ sum of the path weights $e(P)$ over all paths $P$
from $v$ to $w$
(\autoref{sec:TS-correctness})

$P=u_0u_1\ldots u_k$ path
(\autoref{sec:arbitrary-arc})




$\Pr[\ ]$ probability

$\pforw_{uv}$
forward proportions
(Section~\ref{sec:weighted});
standard choice $\pforw_{uv}=1/d^+(u)$

$\qback_{uv}$
backward proportions
(Section~\ref{sec:weighted});
standard choice $\qback_{uv}=1/d^-(v)$



$R_v$ number of nodes reachable from $v$ in a Klee--Minty cube
(\autoref{sec:reachable-set})

$s_0$ source node
(\autoref{sec:start})

$S=\{u_1,\ldots,u_k\}$ subset of $A$ to be condensed in Team Sampling
(\autoref{sec:TS}, \autoref{sec:mix})

$S_1,\ldots,S_i,\ldots,S_B$ the sets that are condensed in Crowd Sampling
(\autoref{sec:crowd})



(U),
(U$_q$),
(U$_1$),
(U$_e$) variations of the weight update step
(\autoref{sec:start})

$U$ universe that indexes the buckets
(\autoref{sec:bucket})



$\var(\,)$ variance
(\autoref{sec:variance})

$\var_j$ variance 
of $F_{v_j}$ for a successor node $v_j$
(\autoref{sec:variance})

$W, W_v$ current weight
(\autoref{sec:start});
the value of $W$ when $v$ is visited
(\autoref{sec:analysis})

$W[v]$ weight of an active node $v\in A$ in the Team Sampling algorithm
(\autoref{sec:TS})


$W_i$, $W'_i$ weights before and after condensation
(\autoref{sec:mix})


$X$
the current accumulated estimate
(\autoref{sec:start});
$X^{\mathrm{current}}$, $X^{\mathrm{final}}$ specific values of $X$
(\autoref{sec:TS-correctness})





\#visited, \#generated, \#saved: statistics in the tables
(Sections \ref{sec:bucket}--\ref{sec:crowd})


\endgroup

\fi

\end{document}

1. check the manuscript for consistency and correctness, structure,
and typos
2. Check also the programs and results on github repository mentioned
in footnote 1.
3. Fill in some text for 
Section 6.4.5: Pattern-avoiding permutations

1. update and correct the notations appendix,
2. add a reference to where a concept was defined
like "(Section 1.2)", if appropriate.
3. Suggest important extra entries that are missing.

4. Reformulate the entrace of section 6.2.5 ``polyominoes'' in the
context where it stands.
